\newtheorem{example}{Example}[section]
\newtheorem{corollary}{Corollary}[section]
\newtheorem{definition}{Definition}[section]
\newtheorem{lemma}{Lemma}[section]
\newtheorem{theorem}{Theorem}[section]
\begin{document}

%
\title{\huge Finding Efficient Region in The Plane with Line segments  }
%
%
%
%

\author{Jack~Wang 
\IEEEcompsocitemizethanks{\IEEEcompsocthanksitem Contact information: cszjwang@gmail.com. 
\protect\\

}
\thanks{}}

\IEEEcompsoctitleabstractindextext{%
\begin{abstract}
Let $\mathscr O$ be a set of $n$ disjoint  obstacles in $\mathbb{R}^2$, $\mathscr M$ be a moving object. Let $s$ and $l$ denote  the starting point  and   maximum path length  of the moving object $\mathscr M$, respectively.  Given a point $p$ in ${R}^2$, we say the point $p$ is achievable for $\mathscr M$  such that $\pi(s,p)\leq l$, where $\pi(\cdot)$ denotes the shortest path length  in the presence of obstacles.  One is to  find   a region  $\mathscr R$ such that,   for any point  $p\in \mathbb{R}^2$, if it is achievable for $\mathscr M$, then $p\in \mathscr R$; otherwise,  $p\notin \mathscr R$.   In this paper, we restrict our attention to the case of line-segment obstacles.  To tackle this problem, we develop three algorithms. We first present  a simpler-version algorithm for the sake of intuition. Its basic idea is to reduce our problem to computing the union of a set of circular visibility regions (CVRs).  This algorithm takes $O(n^3)$ time. By analysing its dominant steps, we break through its bottleneck   by using the short path map (SPM) technique to  obtain those  circles (unavailable beforehand),  yielding an $O(n^2\log n)$ algorithm. Owing to  the finding above, the third algorithm also uses the SPM technique. It however, does not continue to construct the CVRs. Instead, it directly traverses each region of the SPM to trace the boundaries, the final algorithm obtains  $O(n\log n)$ complexity.
\end{abstract}

}

\maketitle

\IEEEdisplaynotcompsoctitleabstractindextext

%
\IEEEpeerreviewmaketitle





\section{Introduction}\label{sec:1}





Suppose there are a set ${\mathscr O}$ of $n$ disjoint obstacles and a moving object $\mathscr M$  in  $\mathbb{R}^2$, and suppose   $\mathscr M$  freely moves  in $\mathbb{R}^2$ except that it cannot be allowed to directly pass through any obstacle $o\in \mathscr O$. 
See the right figure for example. The black line-segments denote the set of obstacles, and the black dot  denotes   the starting point of $\mathscr M$,  the grey line-segment denotes the maximum path length  that $\mathscr M$ is allowed to travel. We address the  problem,  how to find  a region $\mathscr R$ such that, for any point $p$  if $\mathscr M$ can reach  it, then $p\in \mathscr R$; otherwise, $p\notin \mathscr R$.  (See Section 2 for  more formal definitions and other  constraint conditions.)

\begin{figure}[h]
  \centering
\includegraphics[scale=.35]{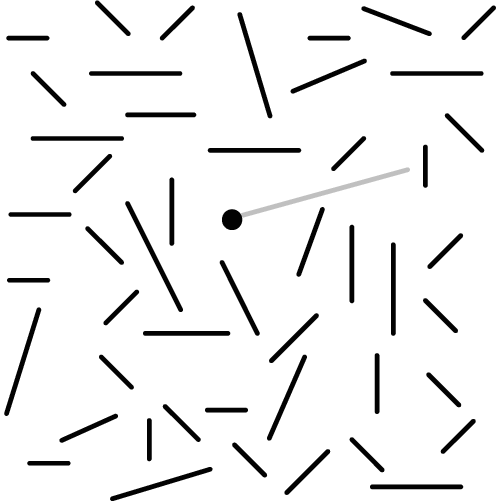}
 \caption{\small Example of Finding Achievable Region} 
 \label{fig:far}
\end{figure}

Clearly, if there is no obstacle in $\mathbb{R}^2$, the answer  is a circle, denoted by $C(s,l)$, where $s$ and $l$ denote the center and radius of the circle, respectively. Now consider the case of \textit{one} obstacle as shown in Figure \ref{fig:1:a}. Firstly, we can easily know any point in  the  region bounded by the solid lines   is achievable  for  $\mathscr M$,    without the need of making a turn, see Figure \ref{fig:1:b}. 
Secondly, we  can also easily know any point in the circle ${C}(a,dist(a,v_1))$ is  achievable for $\mathscr M$, where $dist(\cdot)$  denotes the Euclidean distance, see Figure \ref{fig:1:c}.  Similarly, we can get another circle  ${C}(b,dist(b,v_2))$, see Figure \ref{fig:1:d}. Naturally, we can get the answer by merging the three regions, i.e., two circles and a circular-arc polygon.

By investigating the simplest case, we seemingly can derive a rough solution called \textsf{RS} as follows. Firstly, we   obtain the region denoted by ${\mathscr R}_d$ in which any point is achievable for $\mathscr M$, without the need of making a turn. Second, for each vertex (or endpoint) $v$ of obstacles, if $\pi (s,v)<l$, we obtain the circle centered at  $v$ and with the radius $l- \pi (s,v)$, where $\pi (\cdot)$ denotes the shortest path length in the presence of obstacles.  Finally, we merge all the regions obtained in the previous two steps.  Is it  really so simple? (See Section \ref{sec:problem definition} for a more detailed analysis.)

\begin{figure}[h]
  \centering
  \subfigure[\scriptsize {  } ]{\label{fig:1:a}
     \includegraphics[scale=.29]{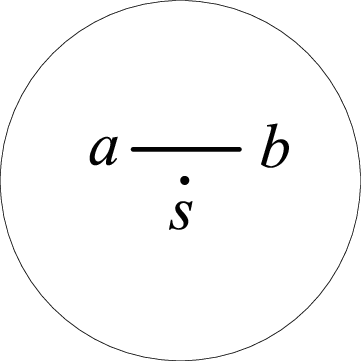}} 
  \subfigure[\scriptsize { }]{\label{fig:1:b}
     \includegraphics[scale=.29]{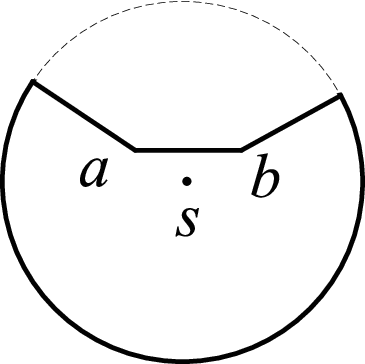}} 
  \subfigure[\scriptsize { }]{\label{fig:1:c}
    \includegraphics[scale=.29]{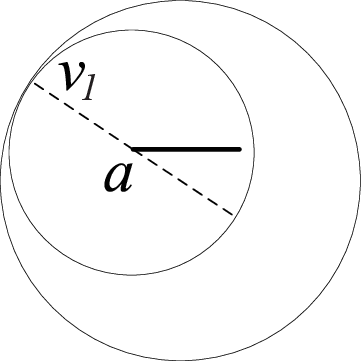}} 
   \subfigure[\scriptsize { }]{\label{fig:1:d}
        \includegraphics[scale=.29]{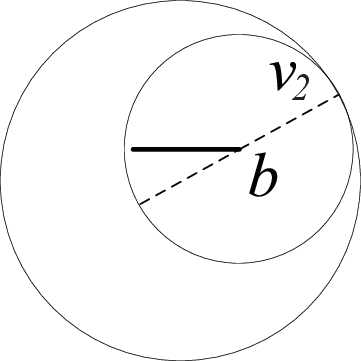}} 
 \caption{\small The simplest case. The black   dot $s$ and solid line $\overline{ab}$  denote  the starting point of $\mathscr M$ and the obstacle, respectively. } 
 \label{fig:1a}
\end{figure}




\paragraph*{\textbf{\upshape Motivations}}

Our study is motivated by the increasing popularity of \textit{location-based service} \cite{EvaggeliaPitoura:Locating,JochenSchiller:location} and \textit{probabilistic answer  on uncertain data} \cite{jianpei:queryAnswering,charuCAggarwal:aSurvey}. Nowadays, many mobile devices, such as cell phones, PDAs, taxies are equipped with GPS. It is common for the  service  provider to keep track of the user' s  location, and provide many location-based services, for instance  finding the taxies being located in a query range, finding the nearest supermarkets or ATMs, etc. The database sever, however, is often impossible  to contain the total status of an entity, due to the limited network bandwidth and limited battery power of the mobile devices. This implies that the current  location of an entity is  uncertain before obtaining the next specific location. In this case, it is meaningful to use a closed region to represent the possible location of an entity  \cite{reynoldcheng:queryingImprecise,aprasadsistal:Modeling}. In database community, they usually assumed  this region   is available beforehand, which is impractical. A proper manner to obtain such a region is generally based on the following information: geographical information around the entity, say ${G_I}$,  the (maximum) speed of the entity, say $V_M$, and the elapsed time, say $T_E$. By   substituting  ``$V_M\cdot T_E$''  with the maximum path length $l$, ${G_I}$ with obstacles $\mathscr O$, it  corresponds to our problem.

Our problem also finds applications in the so-called \textit{moving target search} \cite{andreaskolling:computing,xiaoxunsun:efficient}. Traditionally, moving target search is the problem where a hunter has to catch a moving target, and they  assumed the hunter always knows the current position of the moving target \cite{xiaoxunsun:efficient}. In many scenarios (e.g., when the power of GPS - equipped with the moving target- is used up, or in a sensor network environment, when the moving target walked out of the scope of being  minitored), it is possible that the current position of the moving target cannot be obtained. In this case, we can infer the available region based on some available knowledge such as the geographical information and the previous location. The available region here can contribute to  the reduction of search range. 

\paragraph*{\textbf{\upshape Related work}}
Although our problem is easily stated and can  find many applications,  to date, we are not aware of any published result. Our problem is generally falls in the realm of computational geometry. 
In this community, the problem of computing the visibility polygon (VP, a.k.a., visibility region)  \cite{PaulJHeffernan:anoptimal,TakaoAsano:anefficient,subhansh:worstcase}    is the  most similar to our problem. Given a point $p$ and a set of  obstacles  in a plane, this problem is to find a region in which each point is visible from $p$.    There are three major differences between our problem and the VP  problem: (\romannumeral 1) our problem has an extra constraint, i.e., the maximum path length $l$, whereas the VP problem does not involve this concept; (\romannumeral 2) in the VP problem, the region behind the obstacle must not be visible, whereas in our problem it  may be achievable by making a turn; and (\romannumeral 3) the VP  problem does not involve  the circular arc segments, whereas in our problem we have to handle a number of circular arc segments.    It is easy to see that our problem is more complicated. 


We note that the Euclidean shortest path problem \cite{DTLee:euclidean,SanjivKapoor:anEfficient,TakaoAsano:Visibility,MichaSharir:onShortest,JosephSBMitchell:shortest,SanjivKapoor:efficient,HansRohnert:shortest,JamesAStorer:shortest} also involves the obstacles and  the path length. In  these two aspects it is similar to our problem.  Given two points $p$ and $p^*$, and a set of  obstacles in a plane, this problem is to find the shortest path between $p$ and $p^*$ that does not intersect with any of the obstacles.  It is different from our problem in three points at least: (\romannumeral 1) this problem is to find a path, whereas our problem is to find a region; (\romannumeral 2) the starting point and ending point are known beforehand in this problem, whereas the ending point in our problem is unknown beforehand; and (\romannumeral 3) the shortest path length is unknown beforehand in this problem, whereas the maximum path length $l$ in our problem is given beforehand.    

Finally, our problem   shares at least a  common aspect(s)  with other visibility  problems \cite{EmoWelzl:constructing,DTLee:Computational,ARicci:AnAlgorithm,JeffreyFEastman:an,JosephORourk:art,WeipangChin:optimum,YongKHwang:gross} in this community, since they all  involve the concept of obstacles. But they are more or less different from our problem. The art gallery problem \cite{DTLee:Computational} for example is to find  a  minimum  set $\mathscr{S}$ of points such that for any point $q$ in a simple polygon $\mathscr P$, there is at least a point $p\in \mathscr S$ such that the segment $\overline{pq}$ does not pass through any edge of $\mathscr P$.  The edges of $\mathscr P$  here correspond to the obstacles.  The differences between this problem and our problem is obvious since our problem is to find a region rather than a  minimum  set  of points. More details about the visibility problems please refer to the textbooks, e.g., \cite{SubirKumarGhosh:visibility,MarkdeBerg:computational}.

\paragraph*{\textbf{\upshape Our contributions}}  
We  formulate the problem --- finding achievable region, offer insights  into its nature,  and develop multiple algorithms to tackle it. 

Specifically, in Section 3, we present a simpler-version    algorithm for the sake of intuition. The basic idea of this algorithm is to reduce our problem to computing the union of a set of {circular visibility region}s (CVRs), defined in Section \ref{sec:problem definition}. Intuitively, the CVR can be obtained by computing a boolean union of the visibility region  and the circle. The  visibility region however, may not be always bounded, which  leads to some troubles  and makes this straightforward idea to be difficult to develop. We adopt a different way to compute the CVR instead of  directly solving those troubles. Specifically, we first prune some unrelated obstacles  based on a  simple but efficient pruning mechanism that  ensures all candidate obstacles to be located in the circle, which can  simplify the subsequent computation.  After this, we use the idea of the  rotational plane sweep to construct the CVR, whose boundaries are represented as a series of vertexes and \textit{appendix point}s (defined in Section \ref{sec:problem definition}) that are stored using a double linked list.
We note that, for most CVRs, the circles used to  construct them  are unavailable beforehand. We use the    visibility  graph technique to  obtain the circle. Once we obtain a new CVR, we merge it with the previous one. In this way, we finally get our wanted answer, i.e., the achievable region $\mathscr R$.  This algorithm takes $O(n^3)$ time. 

We analyse the dominant steps of the first algorithm, and break through its bottleneck by incorporating the short path map (SPM)  technique. Specifically, in Section 4, we show the SPM  technique, previously used to answer the short path queries among polygonal obstacles, can be equivalently applied to the context of our concern. We use this technique to obtain those circles (that are used to  construct most CVRs).  Obtaining each circle needs $O(n^2)$ time in the first algorithm,  it takes (only)   $O(\log n)$ time by using the SPM technique. This improvement immediately yields an $O(n^2\log n)$ algorithm.

Thanking   the realization in Section 4 --- the SPM technique can be used to the context of our concern,  
the third algorithm presented in Section 5 also uses this technique. It however, does not continue to construct the CVRs. Instead, it directly traverse each region of the SPM to trace the boundaries. By doing so, it gets   a \textit{circular kernel-region} and many \textit{circular ordinary regions}. (Some circular ordinary
regions possibly consist of not only circular arc  and straight line segments but also hyperbolas, hence, sometimes we call them  \textit{conic polygons}.)  Any two of these regions actually have no \textit{duplicate region}, defined in Section \ref{sec:problem definition}. In theory, we can directly output these conic polygons. We should note that Section \ref{sec:problem definition} emphasizes a constraint condition --- the output of the algorithm to be developed is the well-organized boundaries of the achievable region $\mathscr R$.  In order to satisfy such a constraint, we only need to execute a simple \textit{boolean set operation} (i.e.,   arrange all  edges (segments) of these conic polygons and then combine them in order). Due to the SPM  has complexity $O(n)$, naturally,  the number of edges of all these conic polygons  has  the linear-size complexity. Moreover, in the context of our concern, the number of intersections among all these segments is clearly no more than  $O(n)$, and  constructing the SPM  can be done in $O(n\log n)$ time (which has been stated in Section \ref{sec:modified algorithm}), all these facts form our final algorithm, obtaining an  $O(n\log n)$ worst case upper bound.   

We remark that although this paper focuses on the case of line-segment obstacles,  the FAR problem in the case of polygonal obstacles should  be easily  solved using anyone of the above algorithms (maybe some minor modifications are needed).

\paragraph*{\textbf{\upshape Paper organization}}
In the next section, we  formulate our problem, define some notations, and analyse the so-called rough solution mentioned before. Section \ref{sec:solution} presents a simpler-version $O(n^3)$ algorithm for the sake of intuition. Section \ref{sec:modified algorithm} presents a modified-version  $O(n^2\log n)$ algorithm,  and Section \ref{sec:more efficient solution} presents our final algorithm, running in $O(n\log n)$ time.  Finally, Section \ref{sec:conclusion} concludes this paper with several open problems.

\section{Preliminaries} \label{sec:problem definition}
\subsection{Problem definition and notations}
Let $\mathscr M$ be a moving object in $\mathbb{R}^2$, we assume   $\mathscr M$ can be regarded as a point compared to the total space. Let $\mathscr O$ be a set of $n$ disjoint obstacles in $\mathbb{R}^2$. We assume  that the moving  
object  $\mathscr M$ can  freely move in $\mathbb{R}^2$, but cannot directly pass through any obstacle $o\in \mathscr O$. For clarity, we use $\mathbb{R}^2/ \mathscr O$ to denote the free space.  Let  $s$ be the starting point of the moving object $\mathscr M$. (Sometimes we also call it the \textit{source} point.) Given a point $p\in \mathbb{R}^2/ \mathscr O$, assume that the moving object $\mathscr M$ freely moves from $s$ to $p$,  the total travelled-distance of $\mathscr M$ is called  the \textit{path length}.  We remark that, if  $s$ and  $p$ are identical in $\mathbb{R}^2$, the \textit{path length} is not definitely equal to 0. The maximum value of path length that $\mathscr M$ is allowed to  travel is called the  \textit{maximum path length}.


We use $l$ to denote the maximum path length of the moving object $\mathscr M$. Given two points $p$ and $p^\prime$ in $\mathbb{R}^2$, we use $\pi(p,p^\prime)$ to denote the  shortest path length in presence of obstacles (a.k.a., the geodesic distance). When  two points $p$ and $p^\prime$ are to be visible to each other, we use $dist(p,p^\prime)$ to denote the Euclidean distance between them. Given a point $p$ in $\mathbb{R}^2$, we say  $\mathscr M$ can reach the point $p$ such that $\pi(s,p)\leq l$. (Sometimes, we also say $p$ is achievable for $\mathscr M$.) Given two closed regions, we say they have the  \textit{duplicate region} if the area of their intersection set does not equal to 0; otherwise, we say they have no duplicate region.

\begin{definition}[Finding achievable region]
The problem of {finding achievable region} (FAR) is to find a region denoted by $\mathscr R$ such that for any point $p$ in $ {\mathbb R}^2$, if $\mathscr M$ can reach  $p$, then $p\in \mathscr R$; otherwise, $p\notin \mathscr R$. 
\end{definition}



Clearly, the line segment is the  basic element in geometries. This paper restricts the attention to   the case of disjoint line-segment  obstacles. (We remark that the case of polygonal obstacles should be easily solved using  our proposed algorithms.)  In particular, we emphasize that the output of the algorithm to be developed is the well-organized boundaries of the achievable region $\mathscr R$, rather than a set of \textit{out-of-order} segments.  Furthermore, we are interested in developing the exact algorithms rather than approximate algorithms. Unless stated otherwise, the term \textit{obstacles} refers to \textit{line-segment obstacles} in the rest of the paper.

Let $\mathscr E$ be the set of all endpoints of obstacles $\mathscr O$, we use $\mathscr E^\prime$  to denote the set of endpoints  such that (\romannumeral 1) for each endpoint $e\in \mathscr E^\prime$, $\pi (s,e)<l$; and (\romannumeral 2) for each endpoint $e\in \mathscr E$ but $e\notin \mathscr E^\prime$, $\pi (s,e)\geq l$. For clarity, we call the set $\mathscr E^\prime$ the \textit{effective endpoint set}. (Sometimes we call the endpoint $e\in \mathscr E^\prime$ the \textit{effective endpoint}.)  Given any set, we use the notation ``$|\cdot|$'' to denote the cardinality of the set (e.g., $|\mathscr E^\prime|$ denotes the number of effective endpoints).   Given two points $p$ and $p^\prime$, we use $\overline{pp^\prime}$ to denote the straight line segment joining the two points. Given a circular arc segment with two endpoints $p$ and $p^\prime$, we use  $\widehat{pp^*p^\prime}$ to denote this arc. Note that, here $p^*$ is  used to eliminate the ambiguity. (Recall that a circular arc may be the major/minor arc, two endpoints cannot determine a specific circular arc.) We call such a point (like $p^*$) the  \textit{appendix point}. Given a circle with the center $p$ and the radius $r$, we use $C(p,r)$ to denote this circle. Given a  ray emitting
from a point $p$ and passing through another point $p^\prime$, we use $r_{(p,p^\prime)}$ to denote this ray.  Given two points $p$ and $p^\prime$,  we use $\sphericalangle (p,p^\prime)$ to denote that they are visible to each other, and use $\lnot(\sphericalangle (p,p^\prime)$ to denote the opposite case. Given a point $p$, we use $\mathscr R_{vp}(p)$ to denote the visibility polygon (i.e., visibility region) of $p$.




\subsection{A brief  analysis on the {``rough solution''} }


Now we  verify  whether or not the so-called rough solution (\textsf{RS})  can  work correctly.   
\begin{example}\label{example: analyse rs}
\textup{Based on the  \textsf{RS}, we first get  ${\mathscr R}_d$ (recall Section \ref{sec:1}); it here is bounded by $\overline{v_1a}$, $\overline{ab}$, $\overline{bv_2}$, $\widehat{v_2v_6v_4}$, $\overline{v_4d}$, $\overline{dc}$, $\overline{cv_3}$, $\widehat{v_3v_5v_1}$. See Figure \ref{fig:1:6}.  Next, we  get four circles which are ${C}(a,dist(a,v_1))$,  ${C}(b,dist(b,v_2))$, ${C}(c,dist(c,v_3))$, and ${C}(d,dist(d,v_4))$, respectively. We finally get the answer by merging all the five regions. Interestingly, this example  shows that the \textsf{RS} seems to be feasible.  The example in Figure \ref{fig:1:7}, however,  breaks this delusion.  See the black region in the circle ${C}(a,dist(a,v_1))$. $\mathscr M$ obviously can not reach  any point located  in this region. In other words, the second step of the \textsf{RS} is  incorrect. }   

\textup{Furthermore, consider the example in Figure \ref{fig:1:8}, here ${\mathscr R}_d$ is bounded by $\overline{v_1a}$, $\overline{ab}$, $\overline{bv_4}$, $\overline{v_4d}$, $\overline{dv_2}$, $\widehat{v_2v_3v_1}$. Note that,  $v_4$ here  is the intersection between $r_{({s},b)}$ and $\overline{cd}$, and $\pi (s,v_4)<l$.   Then, whether or not we should  consider  such a point like $v_4$ when we design a new solution? } 
\end{example}

In the next section, we show such a point does not need to be considered (see Lemma \ref{lemma:reduce to cvr}), and prove that  our problem can be reduced   to computing the union of a series of  \textit{circular visibility region}s defined below. 

\begin{definition}[Circular visibility region]\label{defintion:cvr}
Given  a circle $C(p,r)$,   its corresponding circular visibility region, denoted by ${\mathscr R}_{cvr}(p,r)$, is a region  such that, for any point $p^\prime\in {C}(p,r)$, if $p$ and $p^\prime$ are to be visible to each other, i.e., $\sphericalangle (p,p^\prime)$, then $p^\prime\in {\mathscr R}_{cvr}(p,r)$; otherwise, $p^\prime\notin {\mathscr R}_{cvr}(p,r)$.  
\end{definition}

We say $p$ is the center  and $r$ is the radius of ${\mathscr R}_{cvr}(p,r)$, respectively.  Sometimes, we also say that ${\mathscr R}_{cvr}(p,r)$ is the circular visibility region of $p$ when $r$ is clear from the context.  

\begin{figure}[h]
  \centering
  \subfigure[\scriptsize {  } ]{\label{fig:1:6}
     \includegraphics[scale=.36]{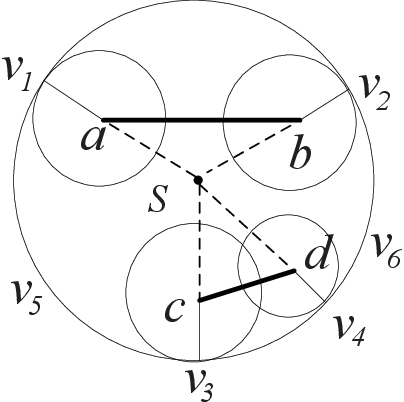}} 
  \subfigure[\scriptsize { }]{\label{fig:1:7}
     \includegraphics[scale=.36]{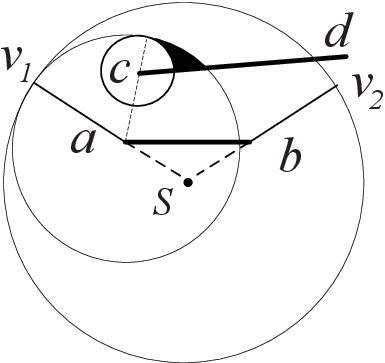}} 
  \subfigure[\scriptsize { }]{\label{fig:1:8}
    \includegraphics[scale=.36]{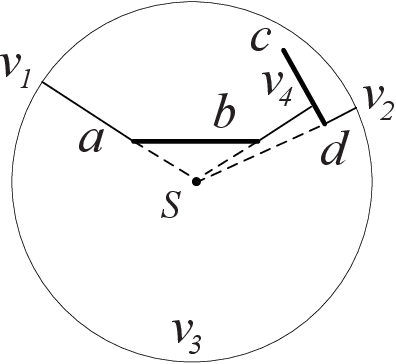}} 
 \caption{\small The case of two obstacles. $\overline{ab}$ and $\overline{cd}$ denote  obstacles, other solid  or dashed lines are the auxiliary lines,  for ease of presentation.} 
 \label{fig:1aaa}
\end{figure}



\section{An  $O(n^3)$ algorithm}\label{sec:solution}

\subsection{Reduction}\label{sec:reduction}

\begin{lemma}\label{lemma:reduce to cvr}
Given a set $\mathscr O$ of  disjoint obstacles, the moving object $\mathscr M$, its starting point ${s}$ and its maximum path length $l$,  the effective endpoint set $\mathscr E^\prime$, and a point $p$ $\in C(s,l)$, we have that if $\pi (s,p)<l$,  then ${\mathscr R}_{cvr}(p, l-\pi(s,p))\subset \bigcup_{e\in \mathscr E^\prime} {\mathscr R}_{cvr}(e, l-\pi (s,e))$ $\bigcup {\mathscr R}_{cvr}(s, l)$.  
\end{lemma}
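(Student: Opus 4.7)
The plan is to establish the claimed containment pointwise: pick an arbitrary $q \in \mathscr{R}_{cvr}(p, l-\pi(s,p))$ and exhibit a center in $\mathscr{E}^\prime \cup \{s\}$ whose CVR contains $q$. By Definition~\ref{defintion:cvr}, such a $q$ is visible from $p$ and satisfies $dist(p,q) \leq l - \pi(s,p)$; concatenating the shortest $s$-to-$p$ path with the obstacle-free segment $\overline{pq}$ therefore gives $\pi(s,q) \leq \pi(s,p) + dist(p,q) \leq l$, so $q$ is itself achievable. From this point on the argument uses only the inequality $\pi(s,q) \leq l$ and no longer refers to $p$.

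Next, I would invoke the classical computational-geometry fact (cf.\ the Euclidean shortest path literature cited in Section~\ref{sec:1}) that, amid disjoint line-segment obstacles, every shortest path between two free-space points is a polyline whose interior vertices are all obstacle endpoints. Writing such a shortest path from $s$ to $q$ as $\Pi = (s = v_0, v_1, \ldots, v_k = q)$, I would split on the number of interior vertices.

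If $k \leq 1$, then $q$ is visible from $s$ along $\Pi$ itself, so $dist(s,q) = \pi(s,q) \leq l$ and $q \in \mathscr{R}_{cvr}(s,l)$. If $k \geq 2$, let $e = v_{k-1}$. As an interior vertex of a shortest path, $e$ is an obstacle endpoint; the final edge of $\Pi$ is $\overline{eq}$, giving both $\sphericalangle(e,q)$ and the additive identity $\pi(s,q) = \pi(s,e) + dist(e,q)$. Because interior vertices of a shortest path are distinct from its endpoints, $e \neq q$ and $dist(e,q) > 0$, so $\pi(s,e) < \pi(s,q) \leq l$. This strict inequality at once certifies $e \in \mathscr{E}^\prime$ and yields $dist(e,q) \leq l - \pi(s,e)$, placing $q$ in $\mathscr{R}_{cvr}(e, l-\pi(s,e))$.

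The only genuinely nontrivial ingredient is the shortest-path characterization, which is a classical result in this setting and I would cite rather than reprove; everything else reduces to unpacking the CVR definition together with the triangle-style bound $\pi(s,q) \leq \pi(s,p) + dist(p,q)$. The one subtle point to watch is the strictness of $\pi(s,e) < l$, since membership in $\mathscr{E}^\prime$ is defined by a strict inequality; this comes for free from the positive length of the last edge $\overline{eq}$. I do not foresee any hidden technical obstacle beyond this.
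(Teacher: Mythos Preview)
Your argument is correct and considerably cleaner than the paper's own proof. The paper proceeds by a three-way case split on the location of $p$ (whether $p\in\mathscr E'$, lies on an obstacle, or lies in free space) and then, in the nontrivial cases, runs an induction on the number $m'$ of ``other obstacles'' that influence $\mathscr R_{cvr}(p,l-\pi(s,p))$, arguing via inscribed-circle containment and ad~hoc four-point circuits. You bypass all of that by first reducing to the single inequality $\pi(s,q)\le l$ and then reading off the desired CVR center directly from the last turning vertex of a geodesic $s\!\to\! q$. What your approach buys is brevity and conceptual transparency: the structural fact about Euclidean shortest paths does all the heavy lifting, and the argument never needs to track how many obstacles interact with a given CVR. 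What the paper's approach buys is self-containment---it never explicitly invokes the shortest-path characterization you cite---though at the cost of a substantially longer and more fragile case analysis. Your handling of the strict inequality $\pi(s,e)<l$ via $dist(e,q)>0$ is exactly the right way to secure membership in $\mathscr E'$; the only implicit step worth flagging in a write-up is that the prefix of a shortest path is itself shortest, which you use when asserting $\pi(s,q)=\pi(s,e)+dist(e,q)$.
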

\begin{proof} 
The proof is not difficult but somewhat long, we move it to  Appendix A. 
\end{proof}


Based on Lemma \ref{lemma:reduce to cvr}, the \textit{set theory} and  the definition of the circular visibility region, we can easily build the follow theorem. 

\begin{theorem}\label{theorem:reduce}
Given a set $\mathscr O$ of  disjoint obstacles, the moving object $\mathscr M$, its starting point ${s}$ and its maximum path length $l$, and the effective endpoint set $\mathscr E^\prime$; 
then, the achievable region $\mathscr R$ can be computed as ${\mathscr R}=\bigcup_{e\in \mathscr E^\prime}{\mathscr R}_{cvr}(e, l- \pi (s,e))$ $\bigcup {\mathscr R}_{cvr}(s, l)$. 
\end{theorem}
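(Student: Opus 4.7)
The plan is to establish the set equality $\mathscr{R} = \bigcup_{e \in \mathscr{E}'}\mathscr{R}_{cvr}(e,\, l-\pi(s,e)) \cup \mathscr{R}_{cvr}(s,\,l)$ by proving inclusion in both directions, invoking Lemma~\ref{lemma:reduce to cvr} for the harder direction.

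For the easy inclusion $\bigcup_{e \in \mathscr{E}'}\mathscr{R}_{cvr}(e, l-\pi(s,e)) \cup \mathscr{R}_{cvr}(s,l) \subseteq \mathscr{R}$, I would pick an arbitrary $p$ in the right-hand side and check that $\pi(s,p)\leq l$. If $p \in \mathscr{R}_{cvr}(s,l)$, then by Definition~\ref{defintion:cvr} we have $\sphericalangle(s,p)$ and $p$ lies inside $C(s,l)$, so $\pi(s,p)=dist(s,p)\leq l$. If instead $p \in \mathscr{R}_{cvr}(e,\,l-\pi(s,e))$ for some $e\in\mathscr{E}'$, then $\sphericalangle(e,p)$ and $dist(e,p)\leq l-\pi(s,e)$; concatenating the shortest $s$-to-$e$ path with the segment $\overline{ep}$ (which does not cross any obstacle since $e$ and $p$ are mutually visible) yields a valid path of length $\pi(s,e)+dist(e,p)\leq l$, so $\pi(s,p)\leq l$ and $p\in\mathscr{R}$.

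For the reverse inclusion $\mathscr{R}\subseteq \bigcup_{e \in \mathscr{E}'}\mathscr{R}_{cvr}(e,\, l-\pi(s,e)) \cup \mathscr{R}_{cvr}(s,\,l)$, I would take an arbitrary achievable point $p$, i.e., $\pi(s,p)\leq l$, and split into two subcases. When $\pi(s,p) < l$, the point $p$ lies strictly inside $C(s,l)$, and trivially $p\in\mathscr{R}_{cvr}(p,\,l-\pi(s,p))$ since $p$ is visible to itself with $dist(p,p)=0\leq l-\pi(s,p)$; Lemma~\ref{lemma:reduce to cvr} then places $p$ in the right-hand side directly.

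The delicate part — and the one I expect to require the most care — is the boundary case $\pi(s,p)=l$, where Lemma~\ref{lemma:reduce to cvr} is inapplicable (its hypothesis is strict). Here I would invoke the well-known structural fact that the Euclidean shortest path from $s$ to $p$ among segment obstacles is a polygonal chain whose internal vertices are obstacle endpoints, and identify the last vertex $e^*$ on that chain before reaching $p$. If $e^*=s$, then $dist(s,p)=l$ with $\sphericalangle(s,p)$, so $p\in\mathscr{R}_{cvr}(s,l)$. Otherwise $e^*$ is an endpoint of some obstacle with $\pi(s,e^*)=l-dist(e^*,p)<l$, hence $e^*\in\mathscr{E}'$; since $\sphericalangle(e^*,p)$ and $dist(e^*,p)=l-\pi(s,e^*)$, the point $p$ lies on the boundary of $\mathscr{R}_{cvr}(e^*,\,l-\pi(s,e^*))$ and therefore in it (treating CVRs as closed regions, consistent with Definition~\ref{defintion:cvr}). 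Combining all cases yields the claimed equality, completing the proof.
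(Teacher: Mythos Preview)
Your proposal is correct and follows the same approach the paper intends: the paper simply states that the theorem follows from Lemma~\ref{lemma:reduce to cvr}, elementary set theory, and Definition~\ref{defintion:cvr}, without spelling out the two inclusions or the boundary case. Your write-up is in fact more careful than the paper's, since you explicitly handle the case $\pi(s,p)=l$ (where Lemma~\ref{lemma:reduce to cvr} does not directly apply) via the last obstacle endpoint on the shortest path---a detail the paper glosses over.
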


Theorem \ref{theorem:reduce} implies that the achievable region $\mathscr R$ is a union of a set of circular visibility regions (CVRs). Clearly,  the boundaries of the CVR usually  consist of  circular arc and/or straight line   segments. The straight line segment can be represented using two endpoints, and the circular arc can be represented using two endpoints and an \textit{appendix point} (recall Section \ref{sec:problem definition}). Naturally, a double linked list can be used to store the information of  boundaries of a CVR. Meanwhile, we can easily know that the boundaries of  $\mathscr R$ also consist of circular arc and/or straight line segments, since it is a union of all the CVRs, implying that we can also use a double linked list to store the information of   boundaries of  $\mathscr R$.


\noindent \textbf{Discussion.}
Given a point $p$ and a positive real number $r$, an easy called to mind method to compute the circular visibility region of $p$ is as follows. First, we compute  ${\mathscr R}_{vp}(p)$ (i.e., the visibility polygon of $p$, recall Section \ref{sec:problem definition}). Second, we compute the intersection set  between the circle ${C}(p,r)$ and the visibility polygon ${\mathscr R}_{vp}(p)$. Clearly, we  can easily get all the  segments that are visible from $p$,  based on existing algorithms \cite{subhansh:worstcase,TakaoAsano:Visibility}. However, the visibility polygon of a point among  a set of obstacles  may not be always bounded \cite{SubirKumarGhosh:visibility}, which arises the following trouble --- 
how to compute the intersection set between an unbounded polygonal region and a circle? 
In the next section, we present a   method  that can compute the  circular visibility region of $p$ efficiently, and is to be free from  tackling the above trouble.

\subsection{Computing the circular visibility region}

\subsubsection{Pruning}\label{subsec:prun}

Before constructing the CVR, we first  prune some unrelated obstacles using a simple but efficient mechanism, which can  simplify the subsequent computation.

\begin{lemma}\label{lemma:pruning}
{
Given the set $\mathscr O$ of obstacles and a circle ${C}(p,r)$, we have that for any obstacle $o\in \mathscr O$, 
\begin{itemize*}
\item if $o\notin$${C}(p,r)$; then, $o$ can be  pruned safely.
\item if  $o\cap$${C}(p,r)$, let $o_s$ be the sub-segment of $o$ such that $o_s\notin C(p,r)$;  then, the sub-segment $o_s$  can be   pruned safely. 
\end{itemize*}
}
\end{lemma}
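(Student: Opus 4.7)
The plan is to argue that both parts of the lemma follow from a single convexity observation: since $p$ is the center of $C(p,r)$ and the closed disk bounded by $C(p,r)$ is convex, any segment $\overline{pq}$ with $q$ inside the disk lies entirely inside the disk. Because the circular visibility region $\mathscr{R}_{cvr}(p,r)$ is by definition a subset of $C(p,r)$, only those portions of obstacles that actually lie inside the disk can possibly obstruct a sight line from $p$ to a point within the disk.

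For the first bullet, I would let $o$ be an obstacle satisfying $o \cap C(p,r) = \emptyset$ (interpreting the notation $o \notin C(p,r)$ this way, i.e., $o$ is entirely outside the closed disk). Pick any candidate point $q \in C(p,r)$; by the convexity argument above, $\overline{pq} \subseteq C(p,r)$, and therefore $\overline{pq} \cap o \subseteq C(p,r) \cap o = \emptyset$. Hence $o$ never blocks the sight line $\overline{pq}$, so its presence cannot affect whether $q \in \mathscr{R}_{cvr}(p,r)$; removing $o$ from $\mathscr{O}$ leaves the CVR unchanged.

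For the second bullet, I would decompose $o$ into the piece $o_{in} := o \cap C(p,r)$ lying inside the disk and the piece $o_s := o \setminus o_{in}$ lying outside. Applied to $o_s$, the very same convexity argument shows that $\overline{pq} \cap o_s = \emptyset$ for every $q \in C(p,r)$. Consequently the visibility of any point in the disk from $p$ is determined solely by $o_{in}$, and replacing $o$ by $o_{in}$ in $\mathscr{O}$ yields the same CVR; equivalently, the sub-segment $o_s$ may be pruned safely.

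I do not expect a genuine obstacle in this proof — it is essentially a one-line convexity observation applied twice. The only care-point is the treatment of boundary incidences: a segment that is tangent to $C(p,r)$ or has an endpoint exactly on $C(p,r)$ intersects $\overline{pq}$ (with $q$ inside the closed disk) only at boundary points of the disk, and such measure-zero incidences do not change the visibility region. I would spell this out in a short remark after the main argument, so that the statement remains correct regardless of whether ``$o \notin C(p,r)$'' is read as disjointness from the open disk or from the closed disk.
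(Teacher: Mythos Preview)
Your proposal is correct and follows essentially the same idea as the paper: both arguments rest on the single observation that an obstacle (or sub-segment) lying entirely outside the disk $C(p,r)$ cannot intersect any segment $\overline{pq}$ with $q\in C(p,r)$, so it cannot affect $\mathscr{R}_{cvr}(p,r)$. The paper phrases this by contradiction (if $o$ affected the CVR there would be a point of $o$ at distance less than $r$ from $p$), while you phrase it directly via convexity of the disk; the substance is identical.
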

\begin{proof}
The proof is straightforward. We only need to prove that $o$  makes no impact on the size of  ${\mathscr R}_{cvr}(p,r)$. We prove this by contradiction.  According to the definition of the CVR, we can easily know that for any point $p^\prime \in {\mathscr R}_{cvr}(p,r)$, the point $p^\prime$ has the following properties: $p^\prime\in {C}(p,r)$ and $\sphericalangle(p,p^\prime)$. We assume that $o$ makes impact on the size of ${\mathscr R}_{cvr}(p,r)$. This implies that   there is at least a point $p^{\prime\prime} \in o$  such that $dist(p,p^{\prime\prime})<r$. On the other hand, based on analytic geometry, it is easy to know that  if $o\notin {C}(p,r)$, then, for any point $p{\prime\prime}\in o$,  $dist(p,p^{\prime\prime})>r$. It is contrary to the previous assumption. Hence $o$ can be pruned safely. (Using the same method, we can prove that the sub-segment $o_s$  can also be pruned safely.)
\end{proof}



\begin{theorem}\label{theorem:prun}
Given  a set  of $n$ obstacles in ${\mathbb{R}^2}$, and a circle ${C}(p,r)$, we can prune the unrelated obstacles  in $O(n)$ time.
\end{theorem}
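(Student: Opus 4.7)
The plan is to show that the classification of each obstacle segment with respect to the circle $C(p,r)$ can be carried out in constant time per segment, so that handling all $n$ obstacles costs $O(n)$ in total. By Lemma~\ref{lemma:pruning}, this constant-time test per segment, together with (when applicable) the extraction of its sub-segment lying inside $C(p,r)$, is enough to produce a correctly pruned input.

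First, for each obstacle $o = \overline{ab}$, I would compute the two squared distances $dist(p,a)^2$ and $dist(p,b)^2$ and compare them with $r^2$. This already resolves two of the three cases in $O(1)$: if both endpoints lie inside (or on) the circle, then $o$ is retained as-is; if exactly one endpoint lies inside and the other outside, then $o$ crosses the boundary in exactly one point, which can be found in $O(1)$ by substituting the parametric form of $\overline{ab}$ into the circle equation and solving the resulting quadratic. In the third case where both endpoints are outside, I would first compute the foot of the perpendicular from $p$ to the line through $a$ and $b$; if its distance to $p$ exceeds $r$, or if the foot does not lie in the interior of the segment, then the segment is entirely outside $C(p,r)$ and is discarded by Lemma~\ref{lemma:pruning}. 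Otherwise the segment crosses the boundary twice, and the two intersection points are again obtained in $O(1)$ from the same quadratic, giving the sub-chord inside $C(p,r)$.

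Summing over the $n$ obstacles yields the claimed $O(n)$ time. The main (and really only) subtlety is the "both endpoints outside" case: one must not naively conclude that such a segment has no bearing on the CVR, because it could still cut through the circle with both intersection points strictly between $a$ and $b$. Ruling this out correctly requires the foot-of-perpendicular test described above, but that test is itself $O(1)$, so the overall bound is unaffected. The correctness of the whole procedure then follows directly from Lemma~\ref{lemma:pruning}, which already certifies that exactly the discarded pieces can be removed without altering ${\mathscr R}_{cvr}(p,r)$.
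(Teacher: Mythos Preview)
Your proposal is correct and follows essentially the same approach as the paper: an $O(1)$ segment-versus-circle test per obstacle, summed over the $n$ obstacles. The only difference is that the paper first applies a coarse filter using the minimum bounding rectangle of $C(p,r)$ before the per-segment circle test; this is purely a practical speedup and does not affect the asymptotic bound, while your write-up is more explicit than the paper's about how the constant-time segment/circle classification (including the both-endpoints-outside chord case) is actually carried out.
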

\begin{proof}
We first use the {minimum bounding rectangle} (MBR) of ${C}(p,r)$ as the input to prune unrelated obstacles. As a result,  we  obtain a set of obstacles  called   \textit{initial candidate obstacle}s (ICOs), see Figure \ref{fig:3n:a}. This step takes  $O(n)$ time.  Next, for each ICO, we check if it can be pruned or partially pruned according to Lemma \ref{lemma:pruning}. Each operation takes constant time and the number of ICOs is $\Omega(n)$ in the worst case. Thus, the total running time for pruning obstacles is $O(n)$.
\end{proof}

\begin{figure}[h]
  \centering
  \subfigure[\scriptsize {  } ]{\label{fig:3n:a}
     \includegraphics[scale=.67]{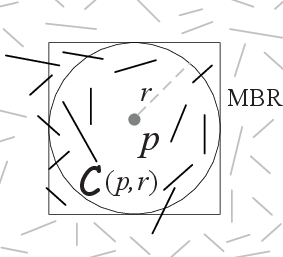}} \hspace{4ex}
  \subfigure[\scriptsize { }]{\label{fig:3n:b}
     \includegraphics[scale=.583]{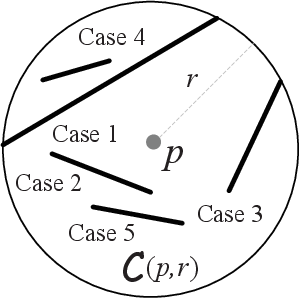}} 
 \caption{\small Example of pruing obstacles. (a) The rectangle denotes the MBR of ${C}(p,r)$, and the black solid line segments denote initial candidate obstacles. (b) The black solid line segments denote candidate obstacles. } 
 \label{fig:3}
\end{figure}

We say  the rest of obstacles (after pruning) are  \textit{candidate obstacle}s. In particular, they  can be generally classified into five types, see Figure \ref{fig:3n:b}. 
We remark  that it is also feasible to develop more complicated pruning mechanism that can prune the Case 4-type obstacle. 

In particular, we can easily realize an obvious characteristic --- all these candidate obstacles are bounded by the circle ${C}(p,r)$.  This fact lets us be without distraction from handling many unrelated obstacles, and be easier to construct the CVR.  In the next subsection, we show how to construct it and store its boundaries in the aforementioned (well-known) data structure ---  double linked list. 

\subsubsection{Constructing}\label{subsec:construction}
The basic idea of constructing the CVR is based on the rotational plane sweep \cite{subhansh:worstcase}. We first introduce some basic concepts, for ease of understanding the detailed process of constructing the CVR. 

Given the circle $C(p,r)$, we use $L$  to denote a horizontal ray emitting from $p$ to the right of $p$. When we rotate $L$ around $p$, we call the obstacles that currently intersect with $L$ the \textit{active obstacles}. 
Assume that $L$ intersects with an obstacle at $u$, we use $o(u)$ to denote this obstacle. Let $T$ be a balance tree, we shall insert the active obstacles into $T$. Assume that $o(e_i)$ is a node in $T$, we use  $o(e_j)$ to denote the parent node of $o(e_i)$ in $T$  by default.     Given an obstacle endpoint $e$ and the horizontal ray $L$, we use $\angle(e)$ to denote the  counter-clockwise angle subtended by the segment $\overline{pe}$ and the ray ${L}$. We say this angle is the \textit{polar angle} of $e$. Given  two endpoints $e$ and $e^\prime$ of an obstacle $o$, if $\angle(e)>\angle (e^\prime)$, we say $e$ is the \textit{upper endpoint} and $e\prime$ is the \textit{lower endpoint}; and vice verse.   
Moreover, let $\mathscr D$ be a double linked list used to store the CVR. 

We remark that   two endpoints of an obstacle $o$ can have the same size of polar angles; this  problem can be  solved by directly ignoring $o$, since this obstacle $o$ almost makes no impact  on the size of ${\mathscr R}_{cvr}(p,r)$. In the subsequent discussion,  we assume  two endpoints of any obstacle $o$ have different  polar angles.

\begin{theorem}\label{theorem:constructing}
Given a circle ${C}(p,r)$,  without loss of generality, assume that there are $n^\prime (\leq n)$ candidate obstacles; then constructing ${\mathscr R}_{cvr}({p},{r})$ can be finished in  $O(n\log n)$ time. 
\end{theorem}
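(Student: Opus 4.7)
The plan is to execute the rotational plane sweep that the authors already foreshadowed in the paragraph introducing $L$, $T$ and $\mathscr D$. Since all $n' \le n$ candidate obstacles are already confined to the closed disk bounded by $C(p,r)$ (by the pruning Theorem \ref{theorem:prun} and the five obstacle types displayed in Figure \ref{fig:3n:b}), the sweep only needs to consider events that arise at the $2n'$ endpoints of those candidates, together with the arcs of $C(p,r)$ itself that remain visible from $p$. The cost of the algorithm will therefore be controlled by the cost of sorting the endpoints and of maintaining $T$ throughout the sweep.

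First I would pre-sort all $2n'$ endpoints by their polar angle $\angle(\cdot)$ with respect to the reference ray $L$, which takes $O(n' \log n') = O(n\log n)$ time. I would then initialise $T$ with all candidate obstacles that the starting position of $L$ already crosses, keyed by the distance from $p$ to the intersection of the obstacle with the current sweep ray; identifying and inserting these takes $O(n' \log n')$ in the worst case. I would also append to $\mathscr D$ the ``starting point'' on the boundary of ${\mathscr R}_{cvr}(p,r)$, which is either the intersection of $L$ with the closest obstacle in $T$ or, if $T$ is empty, the point where $L$ exits $C(p,r)$.

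Next I would process events in increasing polar angle. At a lower endpoint $e$ the obstacle $o(e)$ enters the sweep, so it is inserted into $T$ in $O(\log n)$ time; at an upper endpoint it departs and is deleted, again in $O(\log n)$ time. After each update the front of the sweep is described by either (a) the currently nearest obstacle in $T$, or (b) the arc of $C(p,r)$ when $T$ is empty, so in $O(1)$ extra work I can decide which of three things happened at this event: (i) the newly inserted obstacle becomes the new nearest one and hides what was there before, (ii) the old nearest obstacle just left and control passes to the second-nearest obstacle or to $C(p,r)$, or (iii) neither, in which case no visible boundary changes. In cases (i) and (ii), I append to $\mathscr D$ the finishing portion of the previous visible piece (a straight segment if it was an obstacle, a circular arc of $C(p,r)$ if it was the disk boundary) and open a new piece, recording an appendix point on every arc fragment so that $\widehat{\,\cdot\,}$-notation is unambiguous. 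Each event therefore contributes $O(\log n)$ for the $T$-update and $O(1)$ entries to $\mathscr D$.

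The main obstacle I expect is bookkeeping rather than asymptotics. The sweep has to connect obstacle fragments with arcs of $C(p,r)$ correctly through the shadow regions that an obstacle casts out to the circle, and it has to pick consistent appendix points for those arcs. Degeneracies (two endpoints sharing a polar angle, obstacles tangent to $C(p,r)$, and the Case 4-type obstacles noted in Section \ref{subsec:prun}) must be handled by consistent tie-breaking; the standing assumption at the end of Section \ref{subsec:construction} that the two endpoints of any obstacle have distinct polar angles already rules out the first of these, and the remaining ones are resolved by a symbolic perturbation. Once these details are dispatched, summing the $O(n'\log n')$ sorting cost, the $O(n'\log n')$ initialisation of $T$, and the $O(n')\cdot O(\log n)$ sweep cost yields the claimed $O(n\log n)$ bound because $n'\le n$.
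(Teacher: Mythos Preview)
Your proposal is correct and follows essentially the same approach as the paper: both sort the $2n'$ endpoints by polar angle, initialise a balanced tree $T$ with the obstacles crossed by the starting position of $L$, and then perform a rotational plane sweep in which each lower/upper endpoint event triggers an $O(\log n)$ insert/delete in $T$ together with $O(1)$ updates to the doubly linked list $\mathscr D$. The only differences are cosmetic---the paper spells out explicit pseudocode for each endpoint case whereas you describe the boundary bookkeeping at a higher level---so the two proofs are substantively identical.
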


\begin{proof}
We first draw the horizontal ray ${L}$ from  $p$ to the right of $p$, and then compute the intersections between  ${L}$ and all candidate obstacles, which takes $O(n^\prime)$ time in the worst case.  
Without loss of generality, assume that there are $k$ intersections. Let $u_i$ denote one of these  intersections,  we sort these intersections  such that $dist(p,u_{i-1})<dist(p,u_i)<dist(p,u_{i+1})$ for any $i\in [2,3,$ $\cdots,k-1]$. (Note that, these active obstacles are also being sorted when we sort the intersections.) This takes $O(n^\prime \log n^\prime)$ time, sine $k$ can have $\Omega(n^\prime)$ size in the worst case, see Figure \ref{fig:4n:a}.

We then  insert these active obstacles ${o}(u_i)$ ($i\in [1,\cdots,k]$) into the balance tree $T$ such that,  if $o(u_i)$ is the left (right) child of $o(u_j)$, then $i<j$ ($i>j$). So, $o(u_1)$ is the leftmost leaf of $T$. (Note that, any point $p^*  \in o(u_i)$ is visible from $p$ if and only if $o(u_i)$ is the leftmost leaf in $T$.)   Initializing the tree $T$ takes $O(n^\prime)$ time, since $|T|=k=\Omega(n^\prime)$ in the worst case.

We sort the endpoints of  all candidate obstacles according to their  polar angles.  We denote these sorted endpoints as  $e_1$, $e_2$, $\cdots$, $e_{2n^\prime}$  such that $\angle (e_{i-1})<$  $\angle (e_i)<$  $\angle(e_{i+1})$ for any $i\in [2,3,\cdots,2n^\prime-1]$, see Figure \ref{fig:4n:b}.   Since each obstacle has two endpoints, sorting them takes $O(n^\prime \log n^\prime)$ time. We then rotate  ${L}$ according to the counter-clockwise. Note that, in the process of rotation, the active obstacles in $T$ change if and only if $L$ passes through the endpoints of obstacles. We handle each endpoint event as follows.

{\vspace{.5ex} 
 \hrule
\vspace{.5ex}

}

    { \small

1$~~~~$\textbf{if} $e_i$ is the lower endpoint of obstacle $o(e_i)$  \textbf{then}

2$~~~~$$~~~~$Insert the obstacle $o(e_i)$  into $T$ 

3$~~~~$$~~~~$\textbf{if} $o(e_i)$ is the leftmost leaf in $T$ \textbf{then}

4$~~~~$$~~~~$$~~~~$\textbf{if}  $|T|>1$ \textbf{then}

5$~~~~$$~~~~$$~~~~$$~~~~$Let $z$ be  the intersection  between $\overline{pe_i}$ and $o(e_j)~$ 

$~~~~$$~~~~$$~~~~$$~~~~~~$Put the two points $z$ and $e_i$ into ${\mathscr D}$ in order

6$~~~~$$~~~~$$~~~~$\textbf{else if} $|T|=1$ $\land$ $dist(p,e_i)\neq r$ \textbf{then}

7$~~~~$$~~~~$$~~~~$$~~~~$Let $z$ be  the intersection  between $\overline{pe_i}$ and ${ C}(p,r)$

$~~~~$$~~~~$$~~~~$$~~~~~~$Put the two points $z$ and $e_i$ into ${\mathscr D}$ in order

8$~~~~$$~~~~$$~~~~$\textbf{else} // $|T|=1$ $\land$ $dist(p,e_i)=r$

9$~~~~$$~~~~$$~~~~$$~~~~$Put the point $e_i$ into ${\mathscr D}$ 

10$~~~$\textbf{if} $e_i$ is the upper endpoint  of  $o(e_i)$  \textbf{then}

11$~~~~$$~~~$\textbf{if} $o(e_i)$ is the leftmost leaf in $T$

12$~~~~$$~~~~$$~~~$\textbf{if}  $|T|>1$ \textbf{then}

13$~~~~$$~~~~$$~~~~$$~~~$Let $z$ be  the intersection  between $\overline{pe_i}$ and $o(e_j)$

$~~~~$$~~~~$$~~~~$$~~~~~~~$Put the two points  $e_i$ and $z$  into ${\mathscr D}$ in order

14$~~~~$$~~~~$$~~~$\textbf{else if} $|T|=1$ $\land$ $dist(p,e_i)\neq r$ \textbf{then}

15$~~~~$$~~~~$$~~~~$$~~~$Let $z$ be  the intersection  between $\overline{pe_i}$ and ${ C}(p,r)$

$~~~~$$~~~~$$~~~~$$~~~~~~~$Let $z^*$ be a point  such that $dist(p,z^*)=r$ and $\overline{pz^*}$ is 

$~~~~$$~~~~$$~~~~$$~~~~~~~$subtended by $\overline{pe_i}$ and $\overline{pe_{i+1}}$

$~~~~$$~~~~$$~~~~$$~~~~~~~$Put the three points  $e_i$, $z$ and $z^{*}$  into ${\mathscr D}$ in order

16$~~~~$$~~~~$$~~~$\textbf{else} // $|T|=1$ $\land$ $dist(p,e_i)=r$

17$~~~~$$~~~~$$~~~~$$~~~$Let $z^*$ be a point  such that $dist(p,z^*)=r$ and $\overline{pz^*}$ is 

$~~~~$$~~~~$$~~~~$$~~~~~~~$subtended by $\overline{pe_i}$ and $\overline{pe_{i+1}}$

$~~~~$$~~~~$$~~~~$$~~~~~~~$Put the two points $e_i$ and $z^*$ into ${\mathscr D}$ in order

18$~~~~$$~~~$Remove  $o(e_i)$ from $T$

\vspace{.5ex}  
\hrule
\vspace{.5ex} 
        }

It is easy to know that the height of $T$ is $\log n^\prime$. So, each operation (e.g., insert, delete and find the  active obstacle)  in $T$   takes $O(\log n^\prime)$ time. In addition, other operation (e.g., obtain the intersections $z$, $z^*$, and  put them into ${\mathscr D}$) takes constant time. Thus, handling all endpoint events  takes $n^\prime \log n^\prime$ time. After all endpoints are handled, we finally get the CVR whose boundaries are represented as a series of vertexes and appendix points (stored in ${\mathscr D}$). See Figure \ref{fig:4n:b} for example,  we shall get a series of organized points  $\{e_2,z_1,z_1^*,$ $z_2,e_3,e_5,$ $z_3,e_6,z_2^*,$ $z_4,e_7,$ $z_5,e_8 \}$, where $z_1^*$ and $z_2^*$ are  appendix points.  

We note that   $n^\prime$ can have $\Omega(n)$ size in the worst case. In summary, constructing the CVR takes $O(n\log n)$ time. 
\end{proof}

\begin{figure}[t]
  \centering
  \subfigure[\scriptsize {  } ]{\label{fig:4n:a}
     \includegraphics[scale=.35]{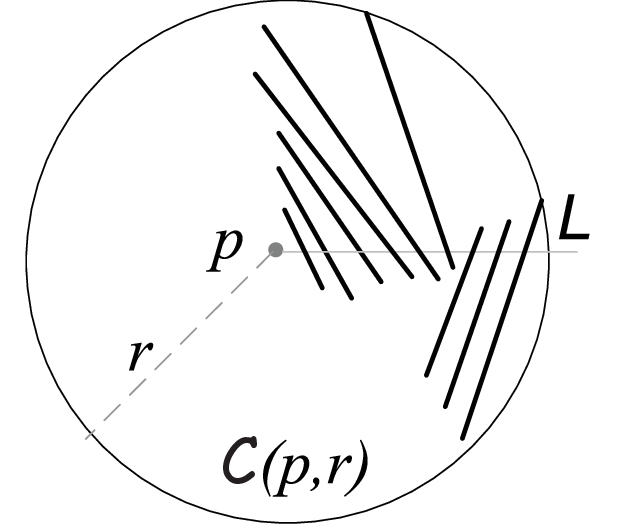}} \hspace{2ex}
  \subfigure[\scriptsize { }]{\label{fig:4n:b}
     \includegraphics[scale=.35]{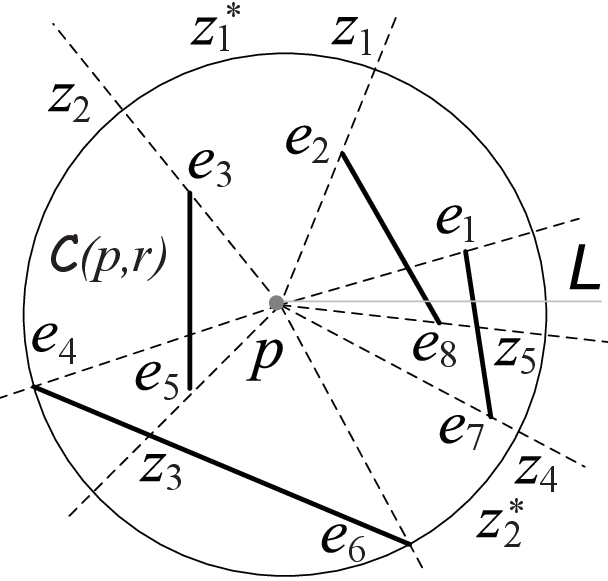}} 
 \caption{\small Example of constructing ${\mathscr R}_{cvr}(p,r)$. The  black and grey solid line segment(s) denote candidate obstacles and the horizontal ray $L$, respectively. (a) $r$ denotes the radius of ${ C}(p,r)$. (b) The dashed lines are the result of rotating $ L$.} 
 \label{fig:4}
\end{figure}

\noindent \textbf{Discussion.} Recall Section \ref{sec:reduction}, we mentioned two types of CVRs: ${\mathscr R}_{cvr}(s,l)$  and ${\mathscr R}_{cvr}(e, l-\pi(s,e))$. Regarding to the former, we use the circle ${ C}(s,l)$ to prune  unrelated obstacles and then to construct it based on the method discussed just now.  Regarding to the latter, the circle ${ C}(e, l-\pi(s,e))$ however, is unavailable beforehand. Hence,  we first have to obtain this circle. In the next subsection, we  show how to get it using a simple method. 
\subsubsection{Obtaining the  circle ${ C}(e, l-\pi(s,e))$}\label{subsec:two types of cvrs}
This simple method  incorporates the classical \textit{visibility graph} technique. To save space, we  simply state the previous results, and then use them to help us obtain this circle.  

\begin{definition}[Visibility graph]
\cite{EmoWelzl:constructing}
The visibility graph of a set of $n$ line segments is an undirected graph whose vertexes consist of all the endpoints of the $n$ line segments, and whose edges connect mutually visible  endpoints.  
\end{definition}

\begin{lemma}\label{lemma:construct visiblity graph}
\cite{EmoWelzl:constructing} Given a set of $n$ line segments in $\mathbb{R}^2$, constructing the visibility graph for these segments can be finished in  $O(n^2)$ time.
\end{lemma}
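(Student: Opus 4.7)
The plan is to build the visibility graph one source vertex at a time, performing a rotational plane sweep around each of the $2n$ endpoints of the input segments and recording, in one pass, every other endpoint that is visible from it. This mirrors the rotational-sweep machinery already used in Section~\ref{subsec:construction} to construct a single CVR, so much of the per-source-vertex procedure is a direct adaptation of that algorithm.

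Fix an endpoint $v$. First sort the remaining $2n-1$ endpoints by polar angle around $v$, breaking ties consistently, and initialize a balanced BST $T$ keyed by distance from $v$ that stores the input segments crossing the initial horizontal sweep ray emanating from $v$. Then rotate the ray counter-clockwise and handle each endpoint $w$ in angular order: $w$ is visible from $v$ iff the segment currently nearest to $v$ along the ray (i.e.\ the leftmost leaf of $T$) either passes through $w$ or lies farther from $v$ than $w$; after this test, insert into $T$ any segment incident to $w$ that newly begins crossing the ray and delete any that now ends. Each of the $O(n)$ events costs $O(\log n)$, so the sweep around a single $v$ takes $O(n \log n)$ and iterating over all $2n$ source vertices yields an $O(n^2 \log n)$ algorithm. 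This already establishes a clean but sub-optimal bound.

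To reach the claimed $O(n^2)$, the logarithmic factor must be shaved. The strategy (this is precisely Welzl's observation in the cited reference) is to avoid re-sorting the angular list from scratch for every source by exploiting point--line duality: dualize the $2n$ endpoints to $2n$ lines; then the angular order of the other endpoints around any fixed $v$ can be read off from the order in which $v$'s dual line meets the other dual lines inside the arrangement. Since an arrangement of $2n$ lines has size $\Theta(n^2)$ and can be constructed in $O(n^2)$ time, all $2n$ angular orderings are available in $O(n^2)$ total time, at an amortized cost of $O(n)$ per source vertex.

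The main obstacle is the last mile: once the angular orderings come for free, one still must argue that the sweep itself can be carried out in $O(n)$ rather than $O(n \log n)$ per source. Handling this requires a topological-sweep viewpoint on the arrangement, in which consecutive events at $v$ correspond to swapping adjacent entries in the current ordered list of ray-intersected segments, an operation that costs $O(1)$ amortized in place of the $O(\log n)$ BST update. Verifying this amortized bound, together with checking that visibility tests at each event reduce to inspecting the current frontmost segment in the list, is the technically delicate part of the proof; once it is in place, summing $O(n)$ work across $2n$ source vertices immediately yields the $O(n^2)$ bound, matching the statement of the lemma.
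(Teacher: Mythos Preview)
The paper does not prove this lemma at all: it is stated with a citation to Welzl and used as a black box. So there is no ``paper's proof'' to compare against; the relevant question is whether your sketch is a faithful outline of Welzl's $O(n^2)$ construction.

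At that level your proposal is essentially correct. The $O(n^2\log n)$ rotational-sweep baseline is the standard Lee-style algorithm, and your plan for removing the logarithmic factor---obtain all $2n$ angular orderings simultaneously via the $O(n^2)$ arrangement of the dual lines, then replace BST operations by $O(1)$ adjacent swaps---captures the key idea. One organizational remark: Welzl's original presentation does not literally run $2n$ separate per-vertex sweeps and then appeal to a topological-sweep amortization. Instead it processes the $\binom{2n}{2}$ endpoint pairs \emph{globally} in order of slope, maintaining for every endpoint the sorted order of the other endpoints along the current sweep direction; each slope event is an adjacent transposition in exactly two of these lists, and the visibility test at that event is a constant-time lookup of a ``currently blocking segment'' pointer. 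This global view is what makes the $O(1)$-per-event bound immediate without a separate amortization argument. Your description is a valid re-packaging of the same mechanism, but if you want to match the cited source you should phrase the final step as a single global slope sweep rather than $2n$ interleaved topological sweeps.
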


\begin{lemma}
\cite{thomasHCormen:introduction} \label{lemma:standard dijkstra}
Given an undirected graph with $n$ vertexes, finding the shortest path between any pair of vertexes can be finished in $O(n^2)$ time using the standard Dijkstra  algorithm.
\end{lemma}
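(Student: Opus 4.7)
The plan is to invoke the standard analysis of Dijkstra's single-source shortest path algorithm, noting that in our setting edge weights are Euclidean distances and hence non-negative, which is the only precondition the algorithm needs. Since we are asked for the shortest path between a pair of vertexes, it suffices to run Dijkstra once from one of the two vertexes (say the source $s$), and then read off the distance to the other; the algorithm produces the shortest-path tree rooted at $s$, so the path itself is recoverable in $O(n)$ additional time by following parent pointers.

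First I would set up the algorithm: maintain an array $d[\cdot]$ of tentative distances (initialized to $\infty$ except $d[s]=0$), a parent array, and a Boolean flag indicating whether each vertex has been ``settled.'' The main loop runs exactly $n$ times; each iteration (i) scans the unsettled vertexes to select one with minimum tentative distance, and (ii) relaxes the edges out of that vertex by the usual rule $d[v]\gets\min\{d[v],\, d[u]+w(u,v)\}$. Because we use a plain array rather than a heap, finding the minimum costs $O(n)$ and the relaxation loop also costs $O(n)$ per settled vertex, giving $O(n)\cdot O(n)=O(n^2)$ total running time, regardless of the number of edges (which, for a visibility graph, may be $\Theta(n^2)$ anyway, so an array implementation is in fact the right choice here).

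For correctness I would argue by induction on the size of the settled set $S$: the invariant is that for every $u\in S$, $d[u]$ equals the true shortest-path distance from $s$ to $u$. The base case is $S=\{s\}$. For the inductive step, suppose we are about to settle a vertex $u$ with the currently smallest tentative distance outside $S$; any alternative $s$-to-$u$ path must leave $S$ for the first time at some edge $(x,y)$ with $x\in S$ and $y\notin S$, and the prefix to $y$ has length $d[y]\geq d[u]$ by the choice of $u$, while the remaining suffix is non-negative, so the alternative path is no shorter than $d[u]$. This is the only step that truly uses non-negativity of weights, which is the key property of our Euclidean setting.

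The main ``obstacle,'' to the extent there is one, is simply to make clear why the $O(n^2)$ bound (rather than the more familiar $O((n+m)\log n)$ heap-based bound) is the right one to cite: in our application the visibility graph can have $\Theta(n^2)$ edges, so paying $O(n)$ per extract-min in exchange for $O(1)$ per relax is strictly preferable, and this matches the bound in Lemma~\ref{lemma:construct visiblity graph} for building the graph in the first place. Since this is a textbook result, I would keep the proof short and refer to \cite{thomasHCormen:introduction} for the detailed analysis, presenting only the invariant and the running-time decomposition above.
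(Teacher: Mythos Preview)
Your proposal is correct and gives the standard textbook argument for Dijkstra's algorithm with the array-based implementation. The paper itself does not prove this lemma at all; it simply cites it as a known result from \cite{thomasHCormen:introduction}, so your sketch already goes beyond what the paper provides.
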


To obtain the circle, our first step is to construct a visibility graph. Note that, we here need to consider the starting point $s$; in other words, the visibility graph consists of not only endpoints of  obstacles but also the starting point $s$. Even so, from Lemma \ref{lemma:construct visiblity graph}, we can still get an immediate corollary below.  

\begin{corollary}\label{corollary visibility graph}
Given a set $\mathscr O$ of $n$ disjoint line-segment obstacles, and the starting point $s$, we can build a visibility graph for the set of obstacles and the starting point $s$ in $O(n^2)$ time.
\end{corollary}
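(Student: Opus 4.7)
The plan is to derive this as an almost immediate consequence of Lemma \ref{lemma:construct visiblity graph}, handling the extra point $s$ by a small reduction rather than reproving anything from scratch. The only subtlety is that Lemma \ref{lemma:construct visiblity graph} is stated for a set of line segments, whereas here we need to include a free-standing point $s$ in the vertex set of the visibility graph. So I first need to argue that introducing $s$ does not break the $O(n^2)$ bound.

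The simplest device is to replace $s$ by a \emph{degenerate line segment} $s^\prime$ whose two endpoints coincide at $s$, and to feed the augmented collection $\mathscr O \cup \{s^\prime\}$ into the visibility graph algorithm of Lemma \ref{lemma:construct visiblity graph}. Since $|\mathscr O \cup \{s^\prime\}| = n+1$, that lemma gives a visibility graph in $O((n+1)^2) = O(n^2)$ time. The resulting graph has as vertices all $2n$ endpoints of the obstacles in $\mathscr O$ together with $s$ (the two coincident endpoints of $s^\prime$ collapse into a single vertex, or can be identified in a post-processing pass of time $O(n)$), and its edges connect exactly those pairs of vertices that are mutually visible in the free space $\mathbb{R}^2 / \mathscr O$. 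This is precisely the visibility graph we need. Alternatively, one can avoid the degenerate segment entirely: first run Lemma \ref{lemma:construct visiblity graph} on $\mathscr O$ in $O(n^2)$ time, and then augment the result by testing visibility between $s$ and each of the $2n$ obstacle endpoints; each such test reduces to checking intersection of a segment with the $n$ obstacles in $O(n)$ time, for a total additive cost of $O(n^2)$.

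Either way, the total running time is $O(n^2)$, which yields the corollary. The ``hard part'' is therefore not really hard at all: one only has to verify that the presence of the isolated point $s$ is compatible with the input format of Lemma \ref{lemma:construct visiblity graph}, and that no asymptotic cost is incurred by accommodating it. Because the argument is this short, I would write the proof as a single paragraph that states the reduction, cites Lemma \ref{lemma:construct visiblity graph}, and concludes with the $O(n^2)$ bound.
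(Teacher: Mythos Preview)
Your proposal is correct and matches the paper's approach: the paper also treats this corollary as an immediate consequence of Lemma \ref{lemma:construct visiblity graph}, giving no separate proof and merely noting that including the extra point $s$ alongside the obstacle endpoints does not affect the $O(n^2)$ bound. Your write-up simply makes explicit the trivial reduction (degenerate segment at $s$, or brute-force addition of edges incident to $s$) that the paper leaves implicit.
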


Let $\mathscr G$ be the visibility graph obtained using the above method. 
For clarity, we say $C(e,l-\pi(s,e))$ is a \textit{valid circle} if $l-\pi(s,e)>0$; otherwise, we say it is an \textit{invalid circle}. 


\begin{theorem}\label{corollary:shortest path}
Given the visibility graph $\mathscr G$, the maximum path length $l$, the staring point $s$, and  an obstacle endpoint $e$,   obtaining the circle ${ C}(e, l-\pi(s,e))$  can be finished in $O(n^2)$ time. 
\end{theorem}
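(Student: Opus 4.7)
The plan is to reduce the computation of $C(e, l - \pi(s,e))$ to a single-source shortest path computation on the visibility graph $\mathscr G$ and then invoke Lemma \ref{lemma:standard dijkstra}. Since the center $e$ of the circle is given as input, the only nontrivial quantity to determine is the radius $l - \pi(s,e)$; once $\pi(s,e)$ is known, both $e$ and the radius are available in $O(1)$ time, so the circle (valid or invalid) is returned immediately.

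The first step is to argue that the geodesic distance $\pi(s,e)$ in the presence of line-segment obstacles coincides with the weighted shortest-path distance from $s$ to $e$ in $\mathscr G$, where each edge $\overline{uv}$ of $\mathscr G$ carries weight $dist(u,v)$. This rests on the classical observation that any shortest path among disjoint line-segment obstacles is a polygonal chain whose interior vertices are obstacle endpoints: if a shortest path bent at a non-endpoint point on a segment, one could locally straighten it, and if it passed through the free interior without touching any obstacle, it would already be a straight line. Consequently, every maximal straight sub-segment of the geodesic joins two vertices of $\mathscr G$ that are mutually visible, hence corresponds to an edge of $\mathscr G$, and the total length of the geodesic equals the total weight of the corresponding path in $\mathscr G$.

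Next, I would run the standard Dijkstra algorithm on $\mathscr G$ from the source vertex $s$. By Corollary \ref{corollary visibility graph}, $\mathscr G$ has $O(n)$ vertices (the $2n$ obstacle endpoints together with $s$), so by Lemma \ref{lemma:standard dijkstra} the single-source shortest-path computation finishes in $O(n^2)$ time and yields $\pi(s,v)$ for every vertex $v$ of $\mathscr G$, in particular for $v=e$. Finally, compare $\pi(s,e)$ with $l$: if $l-\pi(s,e)>0$, output the valid circle $C(e, l-\pi(s,e))$; otherwise declare it invalid. The overall running time is dominated by the Dijkstra step and is $O(n^2)$.

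The only real subtlety is the equivalence $\pi(s,e) = \mathrm{dist}_{\mathscr G}(s,e)$; this is a standard fact, but deserves an explicit remark that line-segment obstacles can be viewed as degenerate polygonal obstacles whose reflex vertices are precisely the two endpoints, so the classical shortest-path-in-a-polygonal-domain argument applies verbatim. Everything else is routine: the center $e$ is given, the radius is a single subtraction, and no additional geometric computation is needed beyond Dijkstra on $\mathscr G$.
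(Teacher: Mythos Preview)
Your proposal is correct and follows essentially the same approach as the paper: compute $\pi(s,e)$ by running the standard Dijkstra algorithm on the visibility graph $\mathscr G$, invoke Lemma~\ref{lemma:standard dijkstra} for the $O(n^2)$ bound, and then form the circle by a constant-time subtraction and validity check. The only difference is that you make explicit the equivalence $\pi(s,e)=\mathrm{dist}_{\mathscr G}(s,e)$ via the classical polygonal-chain argument, whereas the paper takes this for granted; this is a welcome clarification but not a departure in method.
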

\begin{proof}
It is easy to see that the key step of obtaining the circle $C(e,l-\pi(s,e))$ is to compute  the shortest path length from $s$ to $e$, i.e.,  $\pi(s,e)$. Now assume we have obtained the visibility graph $\mathscr G$. 
At anytime we need to obtain the circle ${ C}(e,l-\pi(s,e))$, we run the \textit{standard  Dijkstra  algorithm}  to find the shortest path from $s$ to   $e$, which  takes $O(n^2)$ time{\small \footnote{ \small We also note that Ghosh and Mount \cite{subirkumargosh:anoutput} proposed an output-sensitive $O(E+n\log n)$ algorithm for constructing the visibility graph, where $E$ is the number of edges in the graph. Furthermore,  using Fibonacci heap,  the shortest path of two points in a graph    can be reported in $O(E+n\log n)$ time \cite{MichaelLFredman:fibonacci}. Even so, the worst case running time is still no better than $O(n^2)$ since the visibility graph can have $\Omega (n^2)$ edges in the worst case. }}, see Lemma \ref{lemma:standard dijkstra}. Once the shortest path is found, its length $\pi(s,e)$ can be easily computed in additional $O(k)$ time, where $k$ is the number of segments in the path. Finally,  if $l-\pi(s,e)>0$, we let $l-\pi(s,e)$ and $e$  be  the radius and center of the  circle, respectively, we get a \textit{valid circle}; otherwise, we report it is an \textit{invalid circle}. This can be finished in $O(1)$ time. Pulling all together, this completes the proof. 
\end{proof}

\subsection{Putting it all together}\label{subsec:the overall algorithm}
The  overall algorithm  is shown in Algorithm \ref{alg:overall algorithm}.  The correctness of our algorithm follows from Lemma \ref{lemma:pruning}, Corollary \ref{corollary visibility graph}, Theorems  \ref{theorem:reduce},  \ref{theorem:constructing} and  \ref{corollary:shortest path}. 

\begin{theorem}\label{theorem:algorithm 1 running time}
The running time of Algorithm \ref{alg:overall algorithm} is $O(n^3)$. 
\end{theorem}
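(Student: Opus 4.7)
The plan is to account for each step of Algorithm \ref{alg:overall algorithm} separately and sum the costs, then identify which step dominates. By Theorem \ref{theorem:reduce} the algorithm must (i) build the data structures needed to compute shortest-path lengths, (ii) for each effective endpoint $e \in \mathscr{E}'$ produce the (possibly valid) circle $C(e, l-\pi(s,e))$ together with $C(s,l)$, (iii) prune and construct the corresponding CVR for each such circle, and (iv) union all the CVRs into the boundary representation of $\mathscr{R}$. Since $|\mathscr{E}'| \le |\mathscr{E}| = 2n$, there are at most $2n+1$ circles/CVRs to process.

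First I would bound the preprocessing: by Corollary \ref{corollary visibility graph}, the visibility graph $\mathscr{G}$ on the $2n$ endpoints together with $s$ is constructed in $O(n^2)$ time. Next, for each effective endpoint $e$, Theorem \ref{corollary:shortest path} gives the circle $C(e,l-\pi(s,e))$ in $O(n^2)$ time; adding the trivial circle $C(s,l)$, this step incurs $O(n) \cdot O(n^2) = O(n^3)$ in total. Then, for every circle produced, Theorem \ref{theorem:prun} prunes the obstacle set in $O(n)$ time and Theorem \ref{theorem:constructing} constructs the CVR (stored in a double linked list) in $O(n \log n)$ time. Summed over the $O(n)$ circles, these contribute $O(n^2)$ and $O(n^2 \log n)$ respectively, both dominated by the circle-extraction cost.

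It remains to charge the final union step. Each CVR has $O(n)$ boundary primitives (circular arcs and straight segments), so the $O(n)$ CVRs together contribute $O(n^2)$ boundary primitives. Merging them incrementally, by computing pairwise boolean unions with a Bentley--Ottmann-style sweep on arcs and segments, costs at most $O\bigl((N+I)\log N\bigr)$ where $N=O(n^2)$ is the number of primitives and $I=O(n^2)$ is the number of intersections generated among them (each new CVR can cross the current boundary $O(n)$ times, for $O(n)$ merges). This bounds the union step by $O(n^2 \log n)$, again strictly below $O(n^3)$.

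Adding the four contributions $O(n^2) + O(n^3) + O(n^2 \log n) + O(n^2 \log n) = O(n^3)$, and observing that the $O(n^3)$ term arises solely from the repeated $O(n^2)$-time Dijkstra invocations in step (ii), yields the claim. The main obstacle I anticipate is giving a clean, fully formal bound on the merge step (iv), because neither the complexity of the incremental overlay of CVRs nor the exact number of pairwise arc/segment intersections is treated explicitly in the earlier lemmas; however, any generous bound suffices here since the Dijkstra-per-endpoint loop already dominates, so the dominant term is safe.
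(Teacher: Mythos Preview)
Your proposal is correct and follows essentially the same step-by-step accounting as the paper: both invoke Corollary~\ref{corollary visibility graph}, Theorems~\ref{theorem:prun}, \ref{theorem:constructing}, and \ref{corollary:shortest path} for the respective lines of Algorithm~\ref{alg:overall algorithm} and identify the $O(n)$ Dijkstra calls, each costing $O(n^2)$, as the dominant $O(n^3)$ term. The only difference is in the union step: the paper bounds each incremental merge $\mathscr R \leftarrow \mathscr R \cup \mathscr R_{cvr}(e,\cdot)$ individually as $O(n\log n)$ by asserting (with a pointer to Figure~\ref{fig:4n:b}) that the new CVR meets the running region in only $O(1)$ points, whereas you give a coarser aggregate $O(n^2\log n)$ bound over all merges; both arguments are somewhat informal on this point, as you rightly flag, but either suffices since the merge cost is strictly dominated by the Dijkstra loop.
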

\begin{proof}
Clearly, Lines 2, 3 and 4 take $O(n^2)$, linear and $O(n\log n)$ time, respectively, see Corollary \ref{corollary visibility graph}, Theorems \ref{theorem:prun} and  \ref{theorem:constructing}. Within the \textbf{for} circulation, Line 6 takes $O(n^2)$ time, see Theorem \ref{corollary:shortest path}. Line 8 takes linear time, see Theorem \ref{theorem:prun}. Line 9 takes $O(n\log n)$ time, see Theorem \ref{theorem:constructing}. We remark  that the step ``let $\mathscr R=\mathscr R\bigcup {\mathscr R}_{cvr}(e,l-\pi(s,e))$'' shown in Line 8 is a simple boolean union operation of two polygons with circular arcs, it is used to remove the \textit{duplicate region}. A straightforward adaptation of Bentley-Ottmann's  plane sweep algorithm \cite{JonLouisBentley:Algorithms}, or  the algorithm in \cite{EricBerberich:aComputational} can be used to obtain their union in $O((m+k)\log m)$ time, where $m$ and $k$ respectively are the number of edges and  intersections of the two polygons. Regarding to  the case of our concern, $m=\Omega(n)$ and $k$ has the constant descriptive complexity, see e.g., Figure \ref{fig:4n:b} for an illustration (note: substitute $C(p,r)$ with $C(s,l)$). Hence, the step ``let $\mathscr R=\mathscr R\bigcup {\mathscr R}_{cvr}(e,l-\pi(s,e))$'' actually can be done in $O(n\log n)$ time.  Hence, the \textbf{for} circulation takes $O(n^3)$ time. To summarize, the worst case upper bound of this algorithm is $O(n^3)$. 
\end{proof}

\begin{algorithm}[h]
\caption{ {Finding Achievable Region of $\mathscr M$}} 
\label{alg:overall algorithm} 
\begin{algorithmic}[1] 
\REQUIRE  $\mathscr O$, $s$, $l$
\ENSURE $\mathscr R$
    { \small
        \STATE  Set $\mathscr R=\emptyset$
        \STATE  Construct the visibility graph $\mathscr G$
        \STATE  Prune unrelated obstacles using ${C}(s,l)$
        \STATE  Construct  ${\mathscr R}_{cvr}(s,l)$, and let $\mathscr R={\mathscr R}_{cvr}(s,l)$
        \FOR {each obstacle endpoint $e$}
            \STATE Obtain the circle ${C}(e, l-\pi(s,e))$
            \IF{it is  a \textit{valid circle}} 
                \STATE Prune unrelated obstacles using ${C}(e, l-\pi(s,e))$ 
                \STATE Construct  ${\mathscr R}_{cvr}(e,l-\pi(s,e))$, and let $\mathscr R=\mathscr R\bigcup {\mathscr R}_{cvr}(e,l-\pi(s,e))$
            \ENDIF
        \ENDFOR 
        \RETURN $\mathscr R$
        }
    \end{algorithmic}
\end{algorithm}

%



\noindent \textbf{Summary}
In this section, we have presented a simpler-version algorithm, which is indeed intuitive and easy-to-understand. We can easily see that the dominant step of this algorithm is to  obtain the circle ${ C}(e,l-\pi(s,e))$, i.e., Line 6.   In the next section, we show how to break through this bottleneck and obtain an $O(n^2\log n)$ algorithm.  

\section{An  $O(n^2\log n)$ Algorithm}\label{sec:modified algorithm}
This more efficient solution mechanically relies on the well-known technique called the \textit{shortest path map}, which was previously used to compute the Euclidean shortest path  among \textit{polygonal obstacles}. 

\subsection{Overview of the short path map}\label{subsec:overview of short path map}
\begin{definition}[Shortest path map] \cite{JosephSBMitchell:shortest,JohnHershberger:anoptimal}
The shortest path map of a source point $s$ with respect to a set ${\mathscr O}$ of obstacles is  a decomposition of the free space $\mathbb{R}^2\backslash$${\mathscr O}$ into regions, such that the shortest paths in the free space from $s$ to any two points in the same region pass through the same sequence of obstacle vertices. 
\end{definition}

The shortest path map is usually stored using the \textit{quad-edge data structure} \cite{Leonida:primitives,JosephSBMitchell:ANewAlgorithm,JohnHershberger:anoptimal}. Let $SPM(s)$ denote the short path map of the source point $s$.  It has the following properties. 
\begin{lemma}\label{lemma:point location query} 
\cite{JosephSBMitchell:shortest,JosephSBMitchell:theweighted,JosephSBMitchell:shortestpaoitp96}
Once the $SPM(s)$ is obtained, the map can be used to answer the single-source Euclidean shortest path  query in $O(\log n)$ time.
\end{lemma}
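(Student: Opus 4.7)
The plan is to derive the $O(\log n)$ query time as a direct consequence of performing planar point location on $SPM(s)$, together with the structural information stored in each cell of the map.

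First, I would recall the combinatorial structure of $SPM(s)$ established in the cited work: it is a planar subdivision of the free space $\mathbb{R}^2 \backslash \mathscr{O}$ of total complexity $O(n)$, and each cell $R_v$ is tagged with a unique ``root'' --- an obstacle vertex (or $s$ itself) --- together with the precomputed value $\pi(s,v)$. The defining property is that for every point $q\in R_v$, the Euclidean shortest path from $s$ to $q$ has $v$ as its last turning vertex, $\overline{vq}$ is obstacle-free, and therefore $\pi(s,q)=\pi(s,v)+ dist(v,q)$. This reduces a distance query to a question of identifying which cell contains $q$, plus an $O(1)$-time additive computation.

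Second, I would preprocess the subdivision for planar point location. Since $SPM(s)$ has size $O(n)$, a standard data structure such as Kirkpatrick's hierarchical refinement, persistent balanced search trees, or the trapezoidal-map method built during the construction of $SPM(s)$ itself can be set up in $O(n\log n)$ time so as to answer point-in-cell queries in $O(\log n)$ time. The resulting query procedure is: locate the cell $R_v$ containing $q$; read off the stored $\pi(s,v)$; and output $\pi(s,q)=\pi(s,v)+ dist(v,q)$. If the entire shortest path (rather than only its length) is needed, we additionally walk the predecessor pointers from $v$ back to $s$; this requires time proportional to the number of edges on the path and does not disturb the $O(\log n)$ bound on the distance-only query.

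The main obstacle is that the edges of $SPM(s)$ are not purely line segments: bisectors between two roots with different geodesic distances to $s$ are hyperbolic arcs, so the subdivision is not polygonal in the strict sense. I would therefore appeal to the fact, established in the references cited, that the point-location machinery extends to subdivisions whose edges are algebraic curves of bounded degree, with every primitive comparison still executable in $O(1)$ time. Once this is granted, the argument above goes through unchanged and the claimed $O(\log n)$ bound follows.
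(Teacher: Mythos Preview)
Your proposal is a correct and standard sketch of why $SPM(s)$ supports $O(\log n)$ shortest-path queries: locate the query point in the linear-size planar subdivision, read the root $v$ and the stored $\pi(s,v)$, and add $dist(v,q)$. You also correctly flag the one technical wrinkle---that the cell boundaries include hyperbolic arcs---and handle it by observing that point-location structures for subdivisions with bounded-degree algebraic edges still answer queries in $O(\log n)$ time with $O(1)$-time primitives.

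However, you should be aware that the paper does not actually give a proof of this lemma at all: it is stated purely as a citation to Mitchell's prior work and is used as a black box. So there is no ``paper's own proof'' to compare against; your write-up is essentially the argument one would find in the cited references, and it is adequate for that purpose.
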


\begin{lemma}\label{lemma:spm has linear complexity}
\cite{JosephSBMitchell:shortest,JohnHershberger:anoptimal}
The map $SPM(s)$ has complexity $O(n)$, it consists of $O(n)$ vertexes, edges, and faces. Each edge is a segment of a line or a hyperbola. 
\end{lemma}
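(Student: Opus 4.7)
The plan is to establish the structural properties of $SPM(s)$ by first classifying points of the free space according to the combinatorial type of their geodesic from $s$, then counting the equivalence classes, and finally describing the geometry of the curves that bound the resulting cells.

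For the first step, I would observe that for each point $q \in \mathbb{R}^2 \setminus \mathscr O$, the geodesic realizing $\pi(s,q)$ is a polygonal path whose interior vertices are all reflex endpoints of obstacles. Define the \emph{root} of $q$ to be the last such vertex on the path (or $s$ itself, when $s$ is visible from $q$). Two points share a cell of $SPM(s)$ precisely when their geodesics share the same final vertex and, more generally, the same sequence of preceding obstacle vertices. The cells are therefore in bijection with the nodes of the shortest-path tree rooted at $s$. Since this tree has $s$ plus at most $2n$ obstacle endpoints as nodes, there are $O(n)$ cells. Moreover, each cell is star-shaped with respect to its root, which one uses to argue that it really is a single connected face (rather than several pieces with the same root).

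For the second step, I would analyze the boundary between two adjacent cells with roots $u$ and $v$. A point $q$ on this boundary satisfies $\pi(s,u) + dist(u,q) = \pi(s,v) + dist(v,q)$, equivalently $dist(u,q) - dist(v,q) = \pi(s,v) - \pi(s,u)$, which cuts out a branch of a hyperbola with foci $u$ and $v$ (degenerating to a line when the two shortest-path lengths coincide, or to part of an obstacle edge when the adjacency crosses an obstacle). The remaining boundary edges are \emph{shadow rays}, namely straight lines emanating from a root past a reflex vertex that mark the transition between regions lit from different sides of the obstacle. So every edge of $SPM(s)$ is a segment of a line or a hyperbolic arc, as claimed.

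Finally, with the number of faces bounded by $O(n)$ and the subdivision embedded in the plane, a direct application of Euler's formula gives $O(n)$ vertices and $O(n)$ edges as well, completing the lemma. The main obstacle in carrying this plan out rigorously is the connectedness claim in step one: one must rule out the possibility that a single root gives rise to several disconnected regions, and bound the number of times a bisector curve between two roots can meet an obstacle boundary or another bisector. This is precisely the technical core of the Mitchell and Hershberger--Suri analyses; once it is in hand, the edge characterization and the Euler-formula count are routine bookkeeping.
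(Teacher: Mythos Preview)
The paper does not supply a proof of this lemma at all: it is stated as a citation to Mitchell and to Hershberger--Suri, and the authors simply quote the result for later use. Your sketch is a faithful outline of the standard argument found in those references (root assignment via the shortest-path tree, hyperbolic bisectors between adjacent roots, Euler's formula for the final count), so there is nothing to compare against here---you have essentially reconstructed what the cited sources do, whereas the present paper treats the statement as a black box.
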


The early method to compute $SPM(s)$ can be found in \cite{JosephSBMitchell:ANewAlgorithm}, the author (Mitchell) later  adopted the \textit{continuous Dijkstra paradigm}  to compute this map \cite{JosephSBMitchell:shortest,JosephSBMitchell:shortestpaoitp96}. 
An  optimal algorithm for computing the Euclidean shortest path among a set of \textit{polygonal obstacles} was proposed by Hershberger and   Suri  \cite{JohnHershberger:anoptimal}, their method also used the continuous Dijkstra paradigm, but  it employed two key ideas: a \textit{conforming subdivision}  of the plane and an \textit{approximate wavefront}. Here we  simply state the general steps of constructing $SPM(s)$, and  their main result. (If any question, please refer to  \cite{JohnHershberger:anoptimal} for more details).  

The general steps of constructing $SPM(s)$ can be summarized as follows.
\begin{itemize*}
\item It builds a   confirming subdivision of the plane by considering only the vertexes of polygonal obstacles, dividing the plane into the linear-size cells.
\item It inserts the  edges of obstacles into the subdivision above, and  gets a   confirming subdivision  of the \textit{free space}.  
\item It propagates the approximate wavefront through the cells of the conforming subdivision of the free space, remembering the collisions arose from \textit{wavefront-wavefront} events and \textit{wavefront-obstacle} events.
\item It collects all the collision information, and uses them to determine all  the hyperbola arcs of $SPM(s)$, and finally combines these arcs with the edges of obstacles, forming $SPM(s)$.  
\end{itemize*}

\begin{lemma}\label{lemma:spm construction}\cite{JohnHershberger:anoptimal}
Given a source point $s$, and a set $\mathscr O$ of polygonal obstacles with a total number $n$ of  vertexes in the plane, the map $SPM(s)$  can be computed in $O(n\log n)$ time. 
\end{lemma}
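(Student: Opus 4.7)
The plan is to adapt the continuous Dijkstra paradigm of Mitchell, in which one simulates the expanding wavefront $W_d = \{q \in \mathbb{R}^2 \setminus \mathscr O : \pi(s,q) = d\}$ as $d$ grows from $0$, recording combinatorial changes as events. A naive implementation is too slow because $W_d$ can have $\Omega(n)$ pieces and each piece can cross $\Omega(n)$ obstacle edges, leading to $\Omega(n^2)$ work. To hit $O(n\log n)$ the argument needs two ingredients: a \emph{conforming subdivision} of the free space of linear size with strong locality, and an \emph{approximate wavefront} whose description on each cell edge has only constant complexity.

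First I would construct a conforming subdivision of the plane that sees only the $O(n)$ obstacle vertices, built so that every cell is simply connected with $O(1)$ boundary complexity and so that every edge $e$ of the subdivision has a constant-size ``well-covered'' set of nearby vertices; in particular, any geodesic that crosses $e$ can only bend at one of those $O(1)$ vertices. Such a subdivision can be produced in $O(n\log n)$ time by a balanced quadtree-like construction combined with a plane sweep. I would then glue in the $O(n)$ obstacle edges to get a conforming subdivision of the free space, still of total complexity $O(n)$.

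Next I would run continuous Dijkstra on the cells of this subdivision. I maintain an event priority queue keyed on geodesic time; events are the arrivals of approximate wavefront fragments at cell edges, together with wavefront-wavefront collisions and wavefront-obstacle collisions inside cells. When the wavefront first reaches a cell through one of its edges, I propagate an approximate representation across the cell to its other edges in $O(\log n)$ amortized time; the conformity property guarantees that each cell is traversed by the wavefront only $O(1)$ times in total, so there are only $O(n)$ propagation events overall. Each event emits a constant number of pieces of $SPM(s)$: straight segments from vertex-based generators and hyperbolic arcs along bisectors between pairs of generators. After all events are processed I stitch these pieces together with the original obstacle edges into the quad-edge structure of $SPM(s)$.

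The main obstacle is proving that the approximate wavefront, which throws away all but a constant-size summary on each cell edge, nevertheless suffices to recover the \emph{exact} $SPM(s)$. This requires showing that (i) the generators omitted by the approximation cannot contribute to the final map inside that cell, using the well-covered property of the conforming subdivision, and (ii) the total number of wavefront-wavefront and wavefront-obstacle collisions is $O(n)$, so the event queue stays under control. Both claims rest on careful geometric lemmas from Hershberger and Suri~\cite{JohnHershberger:anoptimal} about how hyperbolic bisectors between vertex generators interact with a conforming subdivision; once those are in place, summing the per-event $O(\log n)$ priority-queue cost over $O(n)$ events, on top of the $O(n\log n)$ subdivision preprocessing, yields the claimed $O(n\log n)$ bound.
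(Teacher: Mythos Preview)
Your proposal is correct and matches the paper's treatment: the paper does not give its own proof of this lemma but simply cites Hershberger and Suri~\cite{JohnHershberger:anoptimal}, summarizing exactly the ingredients you describe (conforming subdivision of the plane on the obstacle vertices, insertion of obstacle edges, approximate-wavefront propagation through cells, collection of wavefront--wavefront and wavefront--obstacle collisions into the hyperbolic arcs of $SPM(s)$). Your sketch is a faithful outline of that cited construction, so there is nothing to add.
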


\subsection{Constructing $SPM(s)$ among line-segment obstacles}
The method to construct $SPM(s)$ among line-segment obstacles is the same as the one in \cite{JohnHershberger:anoptimal}.  

To justify this, we   can consider  the line-segment obstacle as the special (or degenerate) case of the polygonal obstacle --- one has only 2 sides and no area (see Figure \ref{fig:5aN} for an illustration).  Furthermore,  although the free space in the case of line-segment obstacles is (almost) equal to the space of the plane (since each obstacle here has  no area), this  fact still cannot against  applying  the algorithm in \cite{JohnHershberger:anoptimal} to the case of our concern. This is mainly because (\romannumeral 1) we  can still  build the confirming subdivision of the plane by considering only the \underline{endpoints} of line segments firstly, and then  insert the $n$ \underline{line segments} into  the conforming subdivision; (\romannumeral 2)  the collisions are also arose from \textit{wavefront-wavefront} events and \textit{wavefront-obstacle} events, we can also collect these collision information, and then determine the hyperbola arcs of $SPM(s)$; and (\romannumeral 3) we can obtain $SPM(s)$ by (also) combining these arcs with the $n$ \underline{line segments}.  With the argument above, and each line-segment obstacle has only  $2$ endpoints,  from Lemma \ref{lemma:spm construction}, we have an immediate corollary below.
\begin{corollary}\label{corollary:short path map}
Given a source point $s$, and a set $\mathscr O$ of $n$ line-segment obstacles in the plane, the map $SPM(s)$  can be computed in $O(n\log n)$ time. 
\end{corollary}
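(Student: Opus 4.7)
The plan is to reduce the corollary to Lemma~\ref{lemma:spm construction} by exhibiting line segments as a degenerate instance of polygonal obstacles and verifying that every stage of the Hershberger--Suri construction survives the degeneracy. First I would formalize the reduction: regard each line-segment obstacle $\overline{ab}\in\mathscr O$ as a polygon with vertex set $\{a,b\}$ whose boundary consists of the edge $\overline{ab}$ traversed twice (equivalently, a polygon of area zero with two coincident sides). The total vertex count over all $n$ obstacles is $2n=\Theta(n)$, matching the input size measure used in Lemma~\ref{lemma:spm construction}.

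Next, I would walk through the four stages of the construction (as listed in Section~\ref{subsec:overview of short path map}) and check that each remains valid, and that each produces objects of the same asymptotic size and with the same running-time guarantees. Stage~1 builds a conforming subdivision of the plane from only the obstacle vertices; since it depends only on point locations, feeding it the $2n$ endpoints is unproblematic and yields $O(n)$ cells in $O(n\log n)$ time. Stage~2 inserts the $n$ obstacle edges into that subdivision to obtain a conforming subdivision of the free space; the algorithm only needs the edges to be non-crossing line segments respecting the cell structure, which holds here (our obstacles are disjoint line segments), and degeneracy of the ``polygons'' does not change this. Stage~3 propagates the approximate wavefront, recording wavefront--wavefront and wavefront--obstacle collision events; nothing in the event taxonomy assumes obstacles enclose area, so the same propagation machinery applies. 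Stage~4 stitches the collected hyperbolic arcs together with the obstacle edges to produce $SPM(s)$; again this works verbatim with our line segments playing the role of obstacle edges.

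The only conceptual hurdle I anticipate is dispelling the worry that ``the free space is essentially the whole plane'' might break the Hershberger--Suri argument, since in the polygonal case the obstacles genuinely remove positive-area regions. I would address this by emphasizing that the conforming-subdivision/continuous-Dijkstra framework treats obstacles as one-dimensional constraints on visibility (wavefronts cannot sweep through an obstacle edge), not as area-removing sets; the combinatorial analysis of wavefront events, the linear bound on $|SPM(s)|$, and the $O(n\log n)$ time analysis all go through unchanged because they are driven by the $O(n)$ obstacle vertices and edges, not by the area of the complement.

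Having argued that every step of the Hershberger--Suri algorithm applies without modification when the input is a set of $n$ disjoint line segments viewed as $n$ degenerate polygons with $2n$ vertices in total, I would conclude by invoking Lemma~\ref{lemma:spm construction} with this $O(n)$-sized input, obtaining the claimed $O(n\log n)$ bound, and thereby establishing Corollary~\ref{corollary:short path map}. The proof is essentially a reduction plus a verification that no step depends on the obstacles having positive area; I do not expect any nontrivial calculation, only careful bookkeeping that each substep's input and output complexities remain linear in $n$.
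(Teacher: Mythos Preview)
Your proposal is correct and follows essentially the same approach as the paper: both treat each line segment as a degenerate polygon with two vertices and zero area, verify that each of the four stages of the Hershberger--Suri construction (conforming subdivision from vertices, insertion of obstacle edges, wavefront propagation, assembly of $SPM(s)$) remains applicable, explicitly dismiss the worry that the free space being the whole plane could cause trouble, and then invoke Lemma~\ref{lemma:spm construction} with the $2n$ endpoints. Your write-up is somewhat more detailed in checking the per-stage complexities, but the argument is the same.
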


\begin{figure}[h]
  \centering
     \includegraphics[scale=.52]{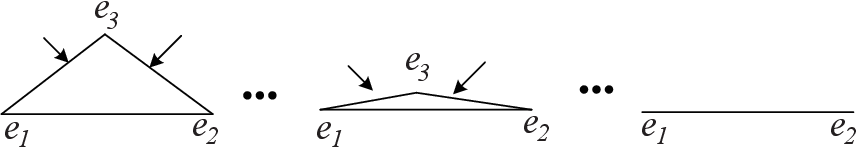} 
 \caption{\small The line-segment obstacle is the degenerate case of the polygonal obstacle. } 
 \label{fig:5aN}
\end{figure}

\subsection{The  algorithm}
To obtain the achievable region $\mathscr R$, the first step of this $O(n^2\log n)$ algorithm is to construct $SPM(s)$. The rest of steps are the same as the ones in Algorithm \ref{alg:overall algorithm} except the step ``obtain the circle $C(e,l-\pi(s,e))$''. We now can  obtain this circle in a more efficient way. This is mainly because the short path  can be computed  in $O(\log n)$ time once $SPM(s)$ is available, see Lemma \ref{lemma:point location query}. Based on this fact and the previous analysis used to prove Theorem \ref{corollary:shortest path}, we can easily build the following theorem.

\begin{theorem}\label{theorem:spm to get circle}
Given the short path map $SPM(s)$, the maximum path length $l$, and  an obstacle endpoint $e$,   obtaining the circle ${ C}(e, l-\pi(s,e))$  can be finished in $O(\log n)$ time.  
\end{theorem}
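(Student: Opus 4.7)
The plan is to follow the template of the proof of Theorem \ref{corollary:shortest path}, but to replace the $O(n^2)$ Dijkstra computation by an $O(\log n)$ query against the precomputed map $SPM(s)$. The first observation I would make is that the only nontrivial quantity needed in order to specify the circle $C(e,\, l-\pi(s,e))$ is the geodesic distance $\pi(s,e)$ itself: the center $e$ is given in the input and the radius is only one subtraction away.

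Next I would invoke Lemma \ref{lemma:point location query}: once $SPM(s)$ has been built, a single-source shortest-path-length query from $s$ to any point in the free space can be answered in $O(\log n)$ time. Since $e$ is an obstacle endpoint, it is in fact a vertex of (or an incidence point of faces of) $SPM(s)$, and locating a face of $SPM(s)$ incident to $e$ together with the stored shortest-path tree information yields the value $\pi(s,e)$ within the $O(\log n)$ budget. With $\pi(s,e)$ in hand, I would compute $l-\pi(s,e)$ in $O(1)$ time; if this value is positive I would return the valid circle $C(e,\, l-\pi(s,e))$, and otherwise report an invalid circle, also in $O(1)$ time. Summing the three contributions produces the claimed $O(\log n)$ bound.

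There is essentially no main obstacle here, since the theorem is an immediate corollary of Lemma \ref{lemma:point location query} combined with an elementary $O(1)$ arithmetic step, all of the algorithmic heavy lifting having already been absorbed into the construction of $SPM(s)$ in Corollary \ref{corollary:short path map}. The only mild subtlety worth flagging is that $e$ may lie on the boundary between several faces of the map rather than in the interior of a unique face; however, this is handled uniformly by the quad-edge representation, which stores at each vertex the predecessor in the shortest-path tree and hence the length of the geodesic path from $s$, so the lookup remains unambiguous and logarithmic.
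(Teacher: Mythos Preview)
Your proposal is correct and follows essentially the same approach as the paper: the paper explicitly says the theorem is built ``based on this fact [Lemma \ref{lemma:point location query}] and the previous analysis used to prove Theorem \ref{corollary:shortest path},'' which is exactly your template of replacing the $O(n^2)$ Dijkstra step by an $O(\log n)$ SPM query and finishing with $O(1)$ arithmetic. Your remark about $e$ possibly lying on a face boundary is a harmless elaboration beyond what the paper states, but it does not change the argument.
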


The correct of this algorithm follows from  Algorithm \ref{alg:overall algorithm},  Corollary \ref{corollary:short path map} and Theorem \ref{theorem:spm to get circle}.  The pseudo codes are shown in Algorithm \ref{alg:overall algorithm 2}.

\begin{algorithm}[h]
\caption{ {Finding Achievable Region of $\mathscr M$}} 
\label{alg:overall algorithm 2} 
\begin{algorithmic}[1] 
\REQUIRE  $\mathscr O$, $s$, $l$
\ENSURE $\mathscr R$
    { \small
        \STATE  Set $\mathscr R=\emptyset$
        \STATE  Construct $SPM(s)$
        \STATE  Prune unrelated obstacles using ${C}(s,l)$
        \STATE  Construct  ${\mathscr R}_{cvr}(s,l)$, and let $\mathscr R={\mathscr R}_{cvr}(s,l)$
        \FOR {each obstacle endpoint $e$}
            \STATE Obtain the circle ${C}(e, l-\pi(s,e))$ based on $SPM(s)$
            \IF{it is  a \textit{valid circle}} 
                \STATE Prune unrelated obstacles using ${C}(e, l-\pi(s,e))$ 
                \STATE Construct  ${\mathscr R}_{cvr}(e,l-\pi(s,e))$, and let $\mathscr R=\mathscr R\bigcup {\mathscr R}_{cvr}(e,l-\pi(s,e))$
            \ENDIF
        \ENDFOR 
        \RETURN $\mathscr R$
        }
    \end{algorithmic}
\end{algorithm}

\begin{theorem}\label{theorem:algorithm 2 running time}
The running time of Algorithm \ref{alg:overall algorithm 2} is $O(n^2\log n)$.
\end{theorem}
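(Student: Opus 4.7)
The plan is to account for the cost of Algorithm~\ref{alg:overall algorithm 2} line by line, in two stages: first the pre-loop work, then one iteration of the \textbf{for}-loop, and finally to multiply by the number of iterations. Essentially all of the supporting bounds have been established earlier in the paper, so the role of the proof is to put them together and to verify that no step has been worsened with respect to Algorithm~\ref{alg:overall algorithm} apart from the one we deliberately changed (Line~6), which we have in fact improved.

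First I would bound the preprocessing outside the \textbf{for}-loop. Line~2 constructs $SPM(s)$, which by Corollary~\ref{corollary:short path map} takes $O(n\log n)$ time. Line~3 prunes unrelated obstacles against $C(s,l)$ in $O(n)$ by Theorem~\ref{theorem:prun}. Line~4 constructs $\mathscr{R}_{cvr}(s,l)$ in $O(n\log n)$ by Theorem~\ref{theorem:constructing}. So everything before the loop costs $O(n\log n)$.

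Next I would bound the cost of a single iteration of the \textbf{for}-loop. Line~6 is precisely the step we improved: given $SPM(s)$, Theorem~\ref{theorem:spm to get circle} yields the circle $C(e,\, l-\pi(s,e))$ in $O(\log n)$ time (rather than the $O(n^2)$ incurred by a standard Dijkstra call in Algorithm~\ref{alg:overall algorithm}). If the circle is valid, Line~8 prunes in $O(n)$ by Theorem~\ref{theorem:prun}, and Line~9 has two parts: constructing $\mathscr{R}_{cvr}(e,\, l-\pi(s,e))$ takes $O(n\log n)$ by Theorem~\ref{theorem:constructing}, and merging it into the running union $\mathscr{R}$ costs $O((m+k)\log m)$ via a Bentley--Ottmann style sweep. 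As argued inside the proof of Theorem~\ref{theorem:algorithm 1 running time}, we have $m=\Omega(n)$ and $k=O(1)$ descriptive complexity for the pair being merged, so the merge is $O(n\log n)$ as well. Hence one iteration runs in $O(\log n)+O(n)+O(n\log n)=O(n\log n)$ time.

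Finally, since the number of obstacle endpoints is $2n$, the loop executes $O(n)$ times and contributes $O(n)\cdot O(n\log n)=O(n^2\log n)$. Adding the $O(n\log n)$ preprocessing does not change the bound, yielding the claimed $O(n^2\log n)$ worst-case running time. I do not expect any genuine obstacle here; the only point that deserves a moment of care is re-confirming that the per-iteration merge in Line~9 really is $O(n\log n)$ (not worse) in the presence of the circular-arc boundaries, which is exactly the observation made in the proof of Theorem~\ref{theorem:algorithm 1 running time} and which transfers verbatim.
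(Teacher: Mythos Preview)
Your proposal is correct and follows essentially the same approach as the paper: the paper's own proof is a one-liner that simply defers to Theorem~\ref{theorem:spm to get circle} and the line-by-line analysis already given in the proof of Theorem~\ref{theorem:algorithm 1 running time}, and you have spelled out precisely that deferred reasoning. The only difference is level of detail, not substance.
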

\begin{proof}
This follows directly from  Theorem \ref{theorem:spm to get circle} and the proof for Theorem \ref{theorem:algorithm 1 running time}.
\end{proof}

\noindent \textbf{Summary.} This section presented an $O(n^2\log n)$ algorithm by modifying Algorithm \ref{alg:overall algorithm}. We can easily see that the dominant step has shifted, compared to Algorithm \ref{alg:overall algorithm}. Now, The bottleneck is located in Line 9, which takes $O(n\log n)$ time.  In the next section, we show how to improve this $O(n^2\log n)$ algorithm to obtain a sub-quadratic algorithm. 

\section{An  $O(n\log n)$ Algorithm}\label{sec:more efficient solution}
The first step of this $O(n\log n)$ algorithm is also to construct the short path map, which is the same as the one in Algorithm \ref{alg:overall algorithm 2}. It however, does not construct the CVRs. Instead, it directly traverses each region of the short path map to obtain their boundaries,  and finally merges them.

\begin{figure*}[t]
  \centering
  \subfigure[\scriptsize {  } ]{\label{fig:6n:a}
     \includegraphics[scale=.36]{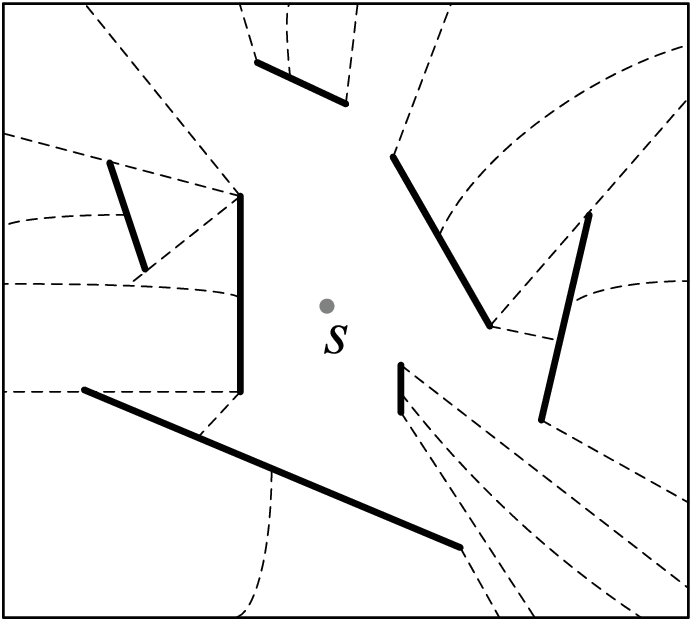}} 
  \subfigure[\scriptsize { }]{\label{fig:6n:b}
     \includegraphics[scale=.36]{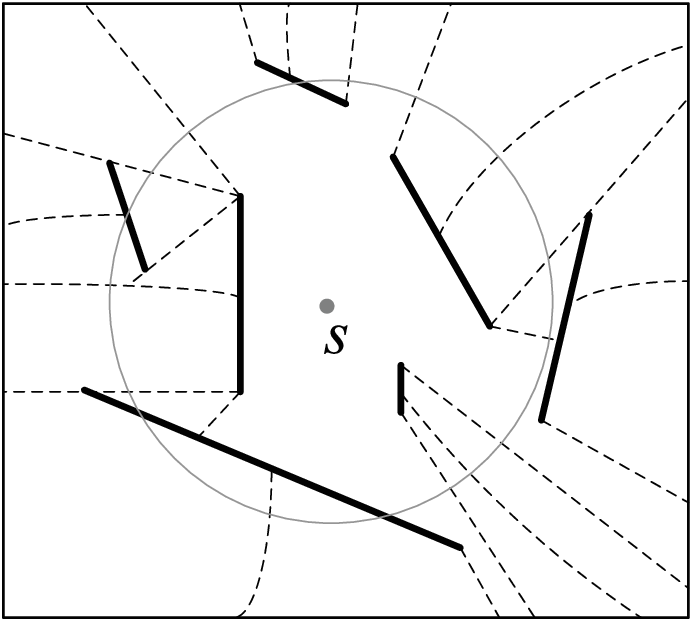}} 
  \subfigure[\scriptsize { }]{\label{fig:6n:c}
    \includegraphics[scale=.36]{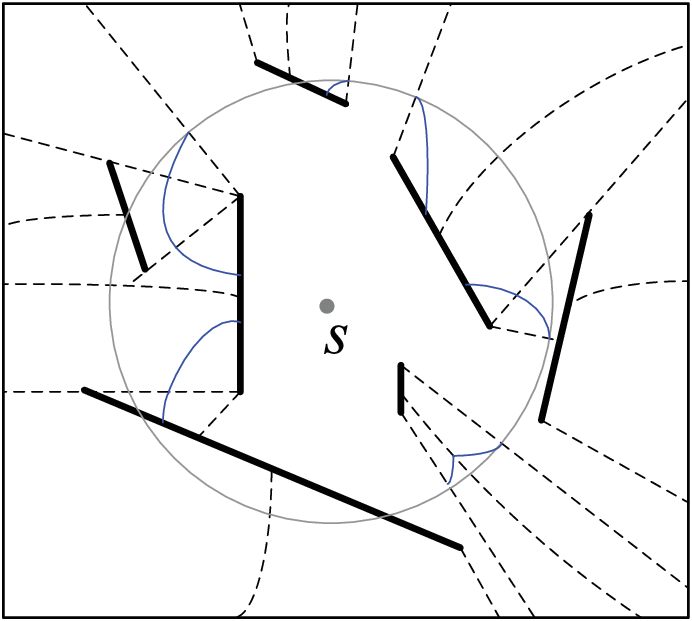}}
  \subfigure[\scriptsize { }]{\label{fig:6n:d}
    \includegraphics[scale=.36]{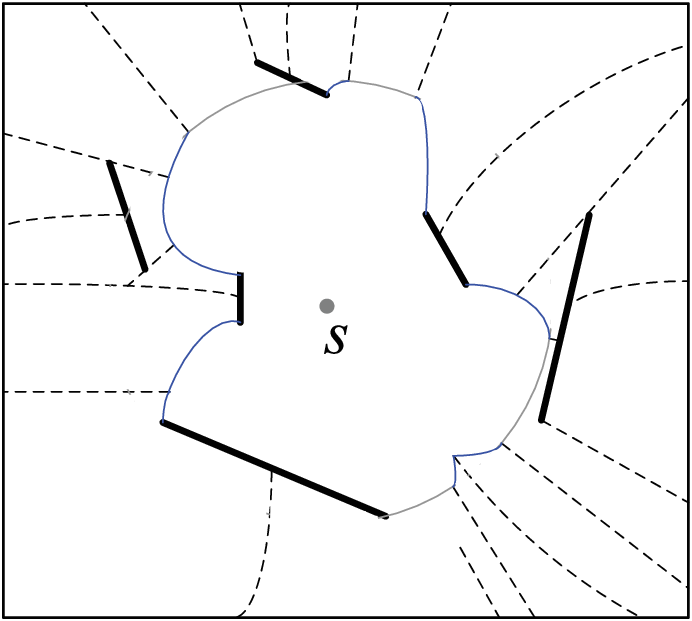}} 
 \caption{\small Illustration of   $SPM(s)$. (a) The kernel-region. (b) The circular kernel-region $\mathscr R_{map}$ actually equals the circular visibility region $\mathscr R_{cvr}(s,l)$. (c) No duplicate region is needed to be handled. (d) The merged result.} 
 \label{fig:6n}
\end{figure*}

\subsection{Regions of $SPM(s)$}\label{subsec: regions of spm}
As mentioned in Section \ref{subsec:overview of short path map}, for  two different points $p$ and $p^\prime$ in the same region, their short paths (i.e., $\pi(s,p)$ and $\pi(s,p^\prime)$) pass through the same sequence of obstacle vertices. We say the final obstacle vertex (among the sequence of obstacle vertexes) is the \textit{control point} of this region.    Let $e_c$ be a control point, we say $e_c$ is a \textit{valid control point} if $\pi(s,e_c)<l$; otherwise, we say it is an \textit{invalid control point}. 

For ease of discussion, we say the region containing the starting (source) point $s$ is the \textit{kernel-region}, and use $\mathscr R_{map}(k)$ to denote this region. We say any other region is the \textit{ordinary region}, and use $\mathscr R_{map}(o)$ to denote an ordinary region of $SPM(s)$. 
Intuitively,  the kernel-region  can have $\Omega(n)$ edges in the worst case (see e.g., Figure \ref{fig:6n:a}),  and  the number of edges of an ordinary region  has the constant descriptive complexity.

Moreover, we say the intersection set of the kernel-region $\mathscr R_{map}(k)$ and the circle $C(s,l)$ is the \textit{circular kernel-region}, and denote it as  $\mathscr R_{map}^*(k)$. Assume that $c_e$ is a valid control point of an ordinary region $\mathscr R_{map}(o)$, we say  the intersection set of this ordinary region and the circle $C(s,l-\pi(s,c_e))$ is the \textit{circular ordinary region}, and denote it as $\mathscr R_{map}^*(o)$. According to the definition of $SPM(s)$, $R_{map}^*(k)$ and $R_{map}^*(o)$, we can easily build the following theorem.

\begin{theorem}\label{theorem:a new reduction}
Given $SPM(s)$ and the maximum path length $l$, without loss of generality, assume that there are a set $\Psi$ of  ordinary regions (among all the ordinary regions) such that each of these regions has the valid control point $e_c$ (i.e., $l-\pi(s,e_c)>0$), implying that there are  a number $|\Psi|$ of \textit{circular ordinary regions}.  Let $\Psi ^*$ be the set of circular ordinary regions.   Then,  the achievable region $\mathscr R$ can be computed as $\mathscr R=\mathscr R_{map}^*(k)\bigcup_{\mathscr R_{map}*(o)\in  \Psi^*} \mathscr R_{map}^*(o)$.
\end{theorem}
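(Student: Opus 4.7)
\medskip
\noindent\textbf{Proof proposal.}
The plan is to prove the two set inclusions $\mathscr R\subseteq \mathscr R_{map}^*(k)\cup\bigcup_{\mathscr R_{map}^*(o)\in\Psi^*}\mathscr R_{map}^*(o)$ and $\mathscr R\supseteq \mathscr R_{map}^*(k)\cup\bigcup_{\mathscr R_{map}^*(o)\in\Psi^*}\mathscr R_{map}^*(o)$, using the defining property of $SPM(s)$: every point of the free space lies in exactly one region (kernel or ordinary), and on each such region the geodesic $\pi(s,\cdot)$ admits a closed-form expression in terms of a fixed ``final hop.''

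For the forward inclusion, I would take an arbitrary achievable point $p$, i.e., one with $\pi(s,p)\leq l$, and locate it in $SPM(s)$. If $p$ falls in the kernel-region $\mathscr R_{map}(k)$, then by definition of the kernel-region $s$ and $p$ are mutually visible, so $\pi(s,p)=dist(s,p)\leq l$, which means $p\in C(s,l)$ and therefore $p\in\mathscr R_{map}(k)\cap C(s,l)=\mathscr R_{map}^*(k)$. Otherwise $p$ lies in some ordinary region $\mathscr R_{map}(o)$ with control point $e_c$; by the SPM property the shortest path from $s$ to $p$ passes through $e_c$ and its last leg is the straight segment $\overline{e_c p}$, so $\pi(s,p)=\pi(s,e_c)+dist(e_c,p)$. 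From $\pi(s,p)\leq l$ we deduce $\pi(s,e_c)<l$ (i.e., $e_c$ is a valid control point and $\mathscr R_{map}(o)\in\Psi$) and $dist(e_c,p)\leq l-\pi(s,e_c)$, hence $p\in C(e_c,l-\pi(s,e_c))$, so $p\in\mathscr R_{map}^*(o)\in\Psi^*$.

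For the reverse inclusion, I would run the same argument backwards. If $p\in\mathscr R_{map}^*(k)$ then $p$ is visible from $s$ and inside $C(s,l)$, so $\pi(s,p)=dist(s,p)\leq l$. If $p\in\mathscr R_{map}^*(o)$ for some circular ordinary region with valid control point $e_c$, the SPM property again gives $\pi(s,p)=\pi(s,e_c)+dist(e_c,p)\leq \pi(s,e_c)+(l-\pi(s,e_c))=l$. In either case $p$ is achievable and therefore lies in $\mathscr R$.

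\medskip
\noindent\textbf{Main obstacle.}
The nontrivial part is not the algebra of the two inclusions but cleanly justifying the identity $\pi(s,p)=\pi(s,e_c)+dist(e_c,p)$ for every $p$ in an ordinary region. This relies on the stronger SPM property that within an ordinary region the entire geodesic is a ``funnel'' whose last edge emanates from the control point $e_c$ and reaches $p$ along the straight segment $\overline{e_c p}$ lying in the free space. I would invoke this as a known structural fact about $SPM(s)$ (Lemma~\ref{lemma:point location query} and the construction in Section~\ref{subsec:overview of short path map}) rather than re-derive it. A secondary technical issue is boundary/tie-breaking: points on the common boundary between two regions of $SPM(s)$ admit two equal-length geodesic descriptions, but this is harmless since either description gives the same value of $\pi(s,p)$ and membership in the union is unaffected; I would dispense with this in one sentence.
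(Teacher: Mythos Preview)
Your proposal is correct and is essentially the argument the paper has in mind: the paper gives no explicit proof, stating only that the theorem follows ``according to the definition of $SPM(s)$, $\mathscr R_{map}^*(k)$ and $\mathscr R_{map}^*(o)$.'' Your two-inclusion argument, hinging on the identity $\pi(s,p)=\pi(s,e_c)+dist(e_c,p)$ inside an ordinary region, is exactly the unpacking of that one-line justification, and your handling of the boundary case is an appropriate refinement.
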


\begin{lemma}\label{lemma:region and control point}
Given the kernel-region $\mathscr R_{map}(k)$ and the circle $C(s,l)$, computing their intersection set can be done in $O(n)$ time. 
\end{lemma}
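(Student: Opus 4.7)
The plan is to exploit two structural facts about the kernel-region: it is exactly the visibility polygon of $s$, and its boundary therefore consists of $O(n)$ straight-line edges. The first claim follows directly from the definition of $SPM(s)$: the kernel-region is the set of points whose shortest path from $s$ visits no intermediate obstacle vertex, which is precisely the set of points directly visible from $s$. The boundary of the visibility polygon of a point is well known to consist of sub-segments of obstacle edges together with ``shadow'' segments lying along rays from $s$ through obstacle endpoints; both types are straight line segments, so no hyperbolic arcs occur on this particular face of $SPM(s)$. By Lemma \ref{lemma:spm has linear complexity}, the entire map has complexity $O(n)$, so the kernel-region contributes at most $O(n)$ such edges.

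Given this, I would walk the boundary of $\mathscr R_{map}(k)$ in cyclic order (the quad-edge representation of $SPM(s)$ supports this traversal in time linear in the size of the face). For each edge I test in $O(1)$ time whether it lies entirely inside $C(s,l)$, entirely outside, or crosses the circle; in the crossing case I compute the at most two intersection points explicitly by solving a quadratic, which is again $O(1)$.

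Next I would assemble the intersection during the same walk. Because both $\mathscr R_{map}(k)$ and the closed disk bounded by $C(s,l)$ are star-shaped with respect to $s$, so is their intersection; in particular $s$ lies strictly inside $\mathscr R_{map}(k)$, so the resulting boundary is a single closed curve whose combinatorial structure is completely determined by the cyclic sequence of circle crossings. I maintain a flag indicating whether the current piece of the polygon boundary is inside the disk: portions inside are emitted (trimmed at entry/exit points), and each ``exit'' crossing is paired with the next ``entry'' crossing by inserting the corresponding arc of $C(s,l)$ between them. Each edge produces $O(1)$ events and the pairing is done on-line during the single traversal, so the assembly stage also runs in $O(n)$ time.

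The main subtlety will be handling degeneracies --- an edge tangent to $C(s,l)$, an intersection point that coincides with a polygon vertex, or a shadow segment whose endpoint lies exactly on $C(s,l)$ --- since these could threaten the in/out parity argument. Each such case can be resolved by standard symbolic perturbation or by a careful case analysis at the affected vertex, contributing only $O(1)$ extra work per incident edge. Summing over all $O(n)$ boundary edges of the kernel-region yields the claimed $O(n)$ bound.
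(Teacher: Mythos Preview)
Your proof is correct and follows essentially the same approach as the paper's: the kernel-region has $O(n)$ edges, and each edge can be intersected with the circle in constant time, giving a linear-time bound overall. The paper's own proof is a one-line appeal to this fact, whereas you additionally justify why the kernel-region is the visibility polygon of $s$, use star-shapedness with respect to $s$ to argue that the assembly can be done in a single pass, and handle degeneracies --- all useful detail, but not a different strategy.
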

\begin{proof}
This stems directly from the fact that the kernel-region $\mathscr R_{map}(k)$ can have $\Omega(n)$ edges in the worst case, and computing their intersections    takes linear time.
\end{proof}

\begin{lemma}\label{lemma:compute circle and ordinary region}
Given an ordinary region $\mathscr R_{map}(o)$ and its control point $e_c$, we assume that $e_c$ is a valid control point.   Then, computing the intersection set of this ordinary region and the circle $C(c_e,l-\pi(s,e_c))$ can be done in constant time. 
\end{lemma}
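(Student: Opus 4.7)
The plan is to argue purely by a complexity counting argument: both operands of the intersection have constant descriptive complexity, so their boolean intersection can be assembled in constant time by brute force. First I would recall the structural fact from Section~5.1 (and from Lemma~\ref{lemma:spm has linear complexity}) that every ordinary region $\mathscr R_{map}(o)$ of $SPM(s)$ is bounded by a constant number of edges, each of which is either a straight line segment (a piece of an obstacle or an extension ray) or a hyperbolic arc arising from a wavefront--wavefront event. In other words, the boundary of $\mathscr R_{map}(o)$ is a closed curve built from $O(1)$ pieces, each an algebraic arc of bounded degree.

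Next I would observe that the circle $C(e_c, l-\pi(s,e_c))$ also has $O(1)$ descriptive complexity — it is a single conic. Therefore the task reduces to intersecting two planar regions, one with $O(1)$ algebraic boundary arcs and the other with a single circular boundary. The standard recipe is: (i) compute the intersection points between the circle and each of the $O(1)$ boundary arcs of $\mathscr R_{map}(o)$ — each such pair of bounded-degree algebraic curves meets in $O(1)$ points which can be found in $O(1)$ time; (ii) split the boundary of $\mathscr R_{map}(o)$ and the circle at these $O(1)$ intersection points; (iii) classify each resulting arc as inside or outside the other region by testing a single representative point; and (iv) stitch the ``inside/inside'' arcs together in cyclic order to form the boundary of $\mathscr R_{map}^*(o)$. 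All four steps touch only $O(1)$ primitive objects, so the total work is $O(1)$.

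The only delicate point — which I would flag as the main (but minor) obstacle — is that some boundary arcs of $\mathscr R_{map}(o)$ are hyperbolic rather than linear, so the intersection with the circle is not purely a line-circle computation but a conic-conic computation. This still reduces to solving a polynomial equation of bounded degree and yields at most $O(1)$ intersection points, hence it does not affect the $O(1)$ running time. Combined with the cyclic-stitching step, which is just a walk around a constant-length doubly linked list, this establishes the claimed constant-time bound and completes the proof.
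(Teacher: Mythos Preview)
Your proposal is correct and follows essentially the same approach as the paper: the paper's proof is a single sentence pointing back to Lemma~\ref{lemma:region and control point} (the kernel-region case), relying on the assertion in Section~\ref{subsec: regions of spm} that an ordinary region has constant descriptive complexity, so intersecting it with a circle costs $O(1)$. Your write-up simply fleshes out what that one-liner entails --- enumerating the boundary arcs, computing conic--conic intersections, splitting, classifying, and stitching --- none of which the paper spells out, but all of which is implicit in its ``same as before'' reference.
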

\begin{proof}
The proof is the similar as the one for Lemma \ref{lemma:region and control point}.
\end{proof}

\begin{lemma}\label{lemma:find control point}
Given $SPM(s)$, finding the control point of any ordinary region can be finished in $O(\log n)$ time. 
\end{lemma}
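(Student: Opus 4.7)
My plan is to reduce the task to a standard planar point-location query on $SPM(s)$, augmented with a control-point label on each face. The proposal has two parts: first, argue that the labels are free to obtain during construction; second, invoke an off-the-shelf point-location structure to retrieve them in $O(\log n)$ time.

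First, I would observe that the continuous Dijkstra paradigm used to construct $SPM(s)$ (see \cite{JohnHershberger:anoptimal} and the summary in Section \ref{subsec:overview of short path map}) naturally associates each face of the map with a single obstacle vertex: the vertex from which the wavefront that colonized that face emanated. By the definition of the shortest path map, this is precisely the last obstacle vertex on the shortest path from $s$ to any point in the face, that is, its control point. Storing this $O(1)$-size label on each face during construction adds no asymptotic cost, so after $SPM(s)$ is built every face carries its control point explicitly. Consequently, if an ordinary region is given directly as a face of the quad-edge representation of $SPM(s)$, reading its control point is an $O(1)$ operation.

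Second, to handle the situation where the ordinary region is specified only by a representative point $p$ (e.g., one encountered while traversing the map as envisioned in the algorithm of Section \ref{sec:more efficient solution}), I would preprocess $SPM(s)$ with a standard planar point-location structure, such as Kirkpatrick's hierarchy or persistent search trees. By Lemma \ref{lemma:spm has linear complexity} the map has complexity $O(n)$, so such a structure is built in $O(n\log n)$ time (subsumed by the $O(n\log n)$ cost of constructing $SPM(s)$ itself, per Corollary \ref{corollary:short path map}) and supports queries in $O(\log n)$ time. A query with $p$ returns, in $O(\log n)$ time, the face of $SPM(s)$ containing $p$, and the control-point label attached to that face is then read in $O(1)$ time. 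This is exactly the mechanism that underlies Lemma \ref{lemma:point location query}, so no new machinery is needed.

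The only real obstacle is a small bookkeeping check: verifying that the construction in \cite{JohnHershberger:anoptimal} does in fact yield, for each face, an explicit pointer to its generating obstacle vertex, and that this vertex is unambiguously the control point in the sense defined in Section \ref{subsec: regions of spm}. Both are immediate from how the approximate wavefront is propagated and from the uniqueness (in general position) of the shortest-path predecessor of a face, so the argument is essentially a citation rather than a new proof. Combining the two parts above establishes the $O(\log n)$ bound claimed in the lemma.
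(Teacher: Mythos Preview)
Your proposal is correct and takes essentially the same approach as the paper: both reduce the task to a point-location query in $SPM(s)$ and then read off the face's generating vertex, invoking Lemma~\ref{lemma:point location query} for the $O(\log n)$ bound. The paper's proof simply phrases this as ``choose a point in the region and execute a Euclidean shortest path query, then take the final obstacle vertex on the path,'' which is exactly your second part stated more tersely.
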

\begin{proof}
We just need to randomly choose a point in the region and execute a  \textit{Euclidean shortest path query}, the final obstacle vertex in the path can be obtained easily. The Euclidean shortest path query  can be done in $O(\log n)$ time, see Lemma \ref{lemma:point location query}. 
\end{proof}

\noindent \textbf{Discussion.}
Recall Section \ref{sec:reduction}, we mentioned two types of circular visibility regions. We remark that the circular kernel-region $\mathscr R_{map}^*(k)$ actually equals the circular visibility region $\mathscr R_{cvr}(s,l)$, see Figure \ref{fig:6n:b}. However, the circular ordinary region $\mathscr R_{map}^*(o)$ does not equal another type of circular visibility region $\mathscr R_{cvr}(e,l-\pi(s,e))$.  More specifically, (\romannumeral 1) $|\mathscr E  ^\prime|$ (see Theorem \ref{theorem:reduce}) is usually less than $|\Psi^*|$ (see Theorem \ref{theorem:a new reduction}); and (\romannumeral 2) the intersection set of two circular visibility regions may be non-empty (i.e., they may have  the \textit{duplicate region}), but the intersection set of any two circular ordinary regions (or, any circular ordinary region and circular kernel-region) is empty (see e.g., Figure \ref{fig:6n:c}), implying that no duplicate region is needed  to be removed, hence in theory, we  can  directly output all the circular ordinary regions and the circular kernel-region. The output shall be a set of conic polygons, since the boundaries of some circular ordinary regions possibly consist of  not only circular arc  and straight line segments but also hyperbolas. But we should note that  Section \ref{sec:problem definition} previously has stated a constraint --- the output of the algorithm to be developed is the well-organized boundaries of the achievable region $\mathscr R$ (just like shown in Figure \ref{fig:6n:d}), rather than a set of out-of-order segments, implying that we  need to handle   edges (or segments) of those conic polygons. Even so, we still can obtain an $O(n\log n)$ worst case upper bound, since the number of segments among all these conic polygons has only  complexity $O(n)$.   

\subsection{The algorithm}
The final algorithm is shown in Algorithm 3.  Its correctness directly follows from  Corollary \ref{corollary:short path map}, Theorem \ref{theorem:spm to get circle}, and Theorem \ref{theorem:a new reduction}.

\begin{algorithm}[h]
\caption{ {Finding Achievable Region of $\mathscr M$}} 
\label{alg:overall algorithm 3} 
\begin{algorithmic}[1] 
\REQUIRE  $\mathscr O$, $s$, $l$
\ENSURE $\mathscr R$
    { \small
        \STATE  Set $\mathscr R=\emptyset$
        \STATE  Construct $SPM(s)$
        \STATE  Obtain  $\mathscr R_{map}^*(k)$
        \FOR {each ordinary region $\mathscr R_{map}(o)$}
            \STATE Obtain the \textit{control point} of  $\mathscr R_{map}(o)$
            \IF{it is  a \textit{valid control point} } 
                \STATE Obtain  $\mathscr R_{map}^*(o)$
            \ENDIF
        \ENDFOR 
        \STATE Let $\mathscr R=\mathscr R_{map}^*(k)\bigcup_{\mathscr R_{map}*(o)\in  \Psi^*} \mathscr R_{map}^*(o)$ // i.e., merge all the regions obtained before
        \RETURN $\mathscr R$
        }
    \end{algorithmic}
\end{algorithm}

\begin{theorem}\label{theorem:algorithm 3 running time}
The running time of Algorithm \ref{alg:overall algorithm 3} is $O(n\log n)$.
\end{theorem}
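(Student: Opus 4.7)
The plan is to bound the running time of each line of Algorithm \ref{alg:overall algorithm 3} separately and then sum the contributions. The analysis mirrors the one for Theorem \ref{theorem:algorithm 2 running time}, except that the construction of the CVRs is replaced by a traversal of the regions of $SPM(s)$, which should remove the $O(n\log n)$ per-iteration cost that was the bottleneck of Algorithm \ref{alg:overall algorithm 2}.

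First I would account for the pre-loop work. Line 2 builds $SPM(s)$ in $O(n\log n)$ time by Corollary \ref{corollary:short path map}. Line 3 intersects the kernel-region $\mathscr R_{map}(k)$ with the circle $C(s,l)$ in $O(n)$ time by Lemma \ref{lemma:region and control point}. Then I would analyse the \textbf{for} loop (Lines 4--9). By Lemma \ref{lemma:spm has linear complexity}, $SPM(s)$ has $O(n)$ faces, so the loop iterates $O(n)$ times. Each iteration performs a control-point lookup (Line 5), which costs $O(\log n)$ by Lemma \ref{lemma:find control point}, followed by the clipping step (Line 7), which costs $O(1)$ per ordinary region by Lemma \ref{lemma:compute circle and ordinary region}. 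Hence the \textbf{for} loop contributes $O(n\log n)$ in total.

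The main obstacle will be Line 10, the boolean union that assembles the well-organized boundary of $\mathscr R$ from $\mathscr R^*_{map}(k)$ and the circular ordinary regions in $\Psi^*$. The key observation I would use is the one highlighted in the discussion of Section \ref{subsec: regions of spm}: because $\mathscr R^*_{map}(k)$ and the elements of $\Psi^*$ are pairwise free of duplicate regions (their interiors partition a portion of $\mathbb{R}^2\setminus \mathscr O$), no overlap needs to be resolved; the union reduces to arranging the edges in cyclic order along the outer boundary. Since $SPM(s)$ has complexity $O(n)$ by Lemma \ref{lemma:spm has linear complexity}, the total number of edges across all these conic polygons is $m=O(n)$, and the number of intersection points between distinct edges is $k=O(n)$ (intersections only occur where a circular arc meets an SPM edge, and each SPM edge is crossed by at most one such arc). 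Applying the Bentley--Ottmann style sweep of \cite{JonLouisBentley:Algorithms} or the algorithm of \cite{EricBerberich:aComputational} to these $m$ segments (straight, circular-arc, and hyperbolic) yields the merged boundary in $O((m+k)\log m)=O(n\log n)$ time.

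Summing the contributions $O(n\log n) + O(n) + O(n\log n) + O(n\log n)$ gives the claimed $O(n\log n)$ worst-case upper bound. The only step where I foresee any subtlety is justifying the $k=O(n)$ bound on intersections among the conic edges, since a careless analysis could allow hyperbolic arcs of different ordinary regions to cross one another; I would handle this by noting that each such hyperbola lies strictly inside its own SPM face, and the only edges shared across faces are the SPM boundary edges themselves, each of which is clipped by at most one arc from $C(s,l)$ or from some $C(e_c,l-\pi(s,e_c))$.
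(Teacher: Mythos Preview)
Your proposal is correct and follows essentially the same approach as the paper's own proof: bound Lines~2 and~3 by Corollary~\ref{corollary:short path map} and Lemma~\ref{lemma:region and control point}, bound the \textbf{for} loop by combining Lemma~\ref{lemma:spm has linear complexity} with Lemmas~\ref{lemma:find control point} and~\ref{lemma:compute circle and ordinary region}, and bound the final merge via an $O((m+k)\log m)$ sweep over $m=O(n)$ conic edges with $k=O(n)$ intersections using~\cite{EricBerberich:aComputational}. The only discrepancies are cosmetic: the merge step is Line~8 in the paper's numbering (the \texttt{[noend]} option suppresses \textsc{EndIf}/\textsc{EndFor}), not Line~10, and the paper does not spell out the justification for $k=O(n)$ beyond pointing to Figure~\ref{fig:6n:c}, whereas you give an explicit argument---which is a small improvement.
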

\begin{proof}
We can easily see that, Lines 2 and  3 take $O(n\log n)$ and linear time respectively, see Corollary \ref{corollary:short path map} and Lemma \ref{lemma:region and control point}. In the \textbf{for} circulation, Lines 5 and 6 take $O(\log n)$ time, see Lemma \ref{lemma:find control point}.   We remark that Line 6   actually is (almost) the  same as the operation  --- determining if $C(e,l-(s,e))$ is a valid circle, see Theorem \ref{theorem:spm to get circle}. Moreover, Line 7 takes constant time, see Lemma \ref{lemma:compute circle and ordinary region}. Note that, the number of ordinary regions is the linear-size complexity, which stems directly from Lemme \ref{lemma:spm has linear complexity}. So, the overall execution time of the \textbf{for} circulation is also $O(n\log n)$. Finally, Line 8 is a simple  \textit{boolean set operation}, which can be done  in  $O(n\log n)$ time, since the number of edges of all these conic polygons has complexity $O(n)$,  arranging all these segments and  appealing to the algorithm in \cite{EricBerberich:aComputational} can immediately produce their union in $O((n+i)\log n)$ time, where $i$ is the number of intersections among all these segments. Note that in the context of our concern, $i$ has the linear complexity, see e.g., Figure \ref{fig:6n:c} for an illustration.  This completes the proof.  
\end{proof}


\noindent \textbf{Summary.} This section presented our final algorithm,  which significantly improves the previous ones. We remark that, maybe there still exist  more efficient solutions to improve some sub-steps in Algorithm 3, but it is obvious that  beating this worst case upper bound is (almost) impossible, this is mainly because, the nature of our problem decides  any solution to be developed, or the ones proposed in this paper, (almost) cannot be free from computing the geodesic distance, i.e., the shortest path length in the presence of obstacles.    Moreover, we remark that although  our attention is focused on the case of disjoint line-segment obstacles in this paper, it is not difficult to  see that all these algorithms can be easily applied to the case of disjoint polygonal obstacles (directly or after the minor modifications). Finally,  the $O(n\log n)$ algorithm is to construct  $SPM(s)$ with respect to all the obstacles, sometimes the maximum path length $l$ is possibly pretty small,  an output-sensitive algorithm can be easily developed by a straightforward extension of this $O(n\log n)$ algorithm.

\section{Concluding remarks} \label{sec:conclusion}
This paper proposed and studied  the FAR problem. In particular, we focused our attention to the case of line-segment obstacles. We first presented a simpler-version algorithm for the sake of intuition, which runs in  $O(n^3)$ time. The basic idea of this algorithm is to reduce our problem to computing the union of a series of circular visibility regions (CVRs). We  demonstrated its correctness, analysed its dominant steps, and  improved it by  appealing to the shortest path map (SPM) technique, which was previously used to compute the Euclidean shortest path among polygonal obstacles. We showed Hershberger-Suri's  method can be equivalently used to compute the SPM in  the case of our concern, and thus immediately yielded an $O(n^2\log n)$ algorithm. Owing to the  realization above, the third algorithm also used this technique. It however, did not construct the CVRs. Instead, it directly traversed each region of the SPM to trace the boundaries, thus obtained the  $O(n\log n)$ worst case upper bound.

We conclude this paper with several open problems.
\begin{enumerate*}
\item The dynamic version of this problem is that, if the maximum path length $l$ is not constant, how to efficiently \textit{maintain}  the dynamic achievable region?
\item The inverse problem  is that, given a closed  region ${\mathscr R}^*$, how  to \textit{efficiently} determine whether or not  ${\mathscr R}^*$ is the \textit{real} achievable region ${\mathscr R}$?
\item The multi-object version of this problem is that, if there are multiple moving objects,  how to \textit{efficiently} find their common part of their achievable regions? 
\end{enumerate*}


\appendix

 \begin{figure*}[t]
   \centering
   \subfigure[\scriptsize {  } ]{\label{fig:1N:a}
      \includegraphics[scale=.55]{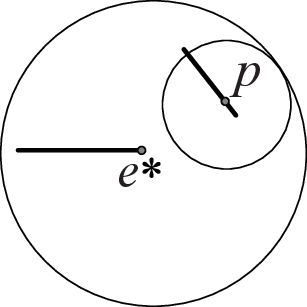}} 
      \hspace{4ex}
   \subfigure[\scriptsize { }]{\label{fig:1N:b}
      \includegraphics[scale=.6]{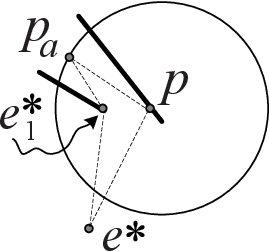}}
      \hspace{2ex}
   \subfigure[\scriptsize { }]{\label{fig:1N:c}
      \includegraphics[scale=.6]{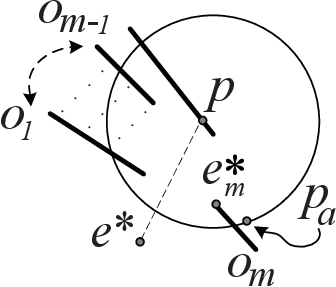} }
      \hspace{2ex}
   \subfigure[\scriptsize { }]{\label{fig:1N:d}
      \includegraphics[scale=.6]{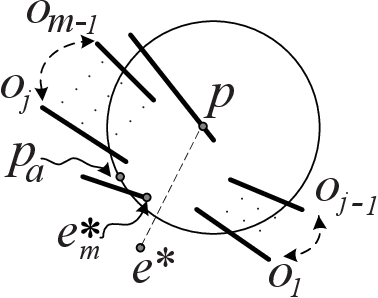} }
  \caption{\small Illustration of Lemma \ref{lemma:reduce to cvr}. (a) The small and big circles denote ${ C}(p,l-\pi(s,p))$ and  ${C}(e^*,l-\pi(s,e^*))$, respectively. In  (b-d), we omit the circle ${C}(e^*,l-\pi(s,e^*))$ for clearness. } 
  \label{fig:1N}
 \end{figure*}

\noindent\textbf{A. The proof of Lemma \ref{lemma:reduce to cvr}} 

\begin{proof} 
There are several cases for a point $p$ $\in C(s,l)$ such that $\pi(s,p)<l$.

{$\bullet$} Case 1:   $\underline{p\in \mathscr E^\prime}$. In this case, it is obvious that  ${\mathscr R}_{cvr}(p, l-\pi(s,p))\subset \bigcup_{e\in \mathscr E^\prime} {\mathscr R}_{cvr}(e, l-\pi(s,e))\bigcup {\mathscr R}_{cvr}(s,l)$.  

{$\bullet$} Case 2: $p$ is located in \underline{ one of obstacles but $p\notin \mathscr E^\prime$}  .   
Without loss of generality,  assume that $e^*$ is the previous point on the shortest path from $s$ to $p$ such that $e^*\in \{s\bigcup \mathscr E^\prime\}$.   Consider the two circles $C(p,l-\pi(s,p))$ and $C(e^*,l-\pi(s,e^*))$. It is easy to know that $dist(p,e^*)+ (l-\pi(s,p))=l-\pi(s,e^*)$  and $p\in C(e^*, l-\pi(s,e^*))$. Hence,  ${ C}(p,l-\pi(s,p))$ must be an inscribed circle of  ${ C}(e^*,l-\pi(s,e^*))$. This implies that ${ C}(p, l-\pi(s,p))\subset { C}(e^*,l-\pi(s,e^*))$. Therefore, we get \textbf{a preliminary conclusion} --- for any point $p_a$ such that $p_a\in {\mathscr R}_{cvr}(p,l-\pi(s,p))$ and $\sphericalangle (p_a, e^*)$, we have that $p_a\in {\mathscr R}_{cvr}(e^*,l-\pi(s,e^*))$.

{\large $\star$} Case 2.1: There is  \underline{\textit{no other obstacle}}  that makes impact on the size of ${\mathscr R}_{cvr}(p, l-\pi(s,p))$. See  Figure \ref{fig:1N:a}.  Clearly, for any point  $p_a\in \mathscr R_{cvr}(p,l-\pi (s,p))$, we have  $\sphericalangle (p_a,e^*)$.  By  the preliminary conclusion shown in the previous paragraph, this completes {the proof of Case 2.1}. 

{\large $\star$} Case 2.2: 
There are  \underline{\textit{{other} obstacles}}  that make impact on the size of ${\mathscr R}_{cvr}(p, l-\pi (s,p))$.  
The key of point is to prove that,  for any point $p_a$ such that  $p_a\in {\mathscr R}_{cvr}(p,  l-\pi (s,p) )$ and $\lnot(\sphericalangle (p_a,e^*))$, it must be located in a circular visibility region whose center is the endpoint of certain obstacle.
Let $\mathscr O^\prime$ be the set of other obstacles that make impact on the size of ${\mathscr R}_{cvr}(p,  l-\pi (s,p) )$.  
For ease of discussion,  assume that $e^*_i$ and $e^\prime _i$ are the  endpoints of the $i$th obstacle (among $|\mathscr O^\prime|$ obstacles).  Let  $e^*_i$  denote the  endpoint  such that $\pi(e^*,e^*_i)\leq\pi(e^*,e^\prime_i)$. Let $m^\prime=|\mathscr O^\prime|$, and  $m$ be an arbitrary integer. We next prove \textit{by induction} that \textbf{the following proposition} called $\mathbb{P}$ holds --- for any point $p_a$ such that $p_a\in {\mathscr R}_{cvr}(p,l-\pi(s,p))$ and    $\lnot (\sphericalangle (p_a,e^*))$, we have that $p_a\in \bigcup _{i=1} ^{m^\prime} {\mathscr R}_{cvr}(e^*_i,  l-\pi(s,e^*_i) )$.

We first consider \underline{$m^\prime=1$}.  We connect the following points, $e^*$, $p$, $p_a$ and $e^*_1$. Then, they build a circuit  with four edges (see Figure \ref{fig:1N:b}). Let $\Delta$ be ($dist(e^*,p)$+$dist(p,p_a)$)$-$($dist(e^*,e^*_1)$+ $dist(e^*_1,p_a)$).  According to \textit{analytic geometry} and \textit{graph theory}, it is easy to know that $\Delta>0$.    This implies that the radius of ${\mathscr R}_{cvr}(e^*_1, l-\pi(s,e^*_1)  )$ is equal to $dist(e_1^*,p_a)+ \Delta$. So,  for any point $p_a$ such that $p_a\in {\mathscr R}_{cvr}(p, l-\pi(s,p))$ and $\lnot (\sphericalangle (p_a,e^*))$, we have that   $p_a\in {\mathscr R}_{cvr}(e^*_1, l-\pi(s,e^*_1) )$. Therefore, the proposition $\mathbb{P}$  holds when $m^\prime=1$.

By convention, we assume   $\mathbb{P}$ holds when \underline{$m^\prime=m-1$}. We next show it  also holds when \underline{$m^\prime=m$}.  Let $o_1$,  $\cdots$, $o_m$ denote  these obstacles, i.e., $\mathscr O^\prime=\{o_1,\cdots,o_m\}$. We remark that (\romannumeral 1) it corresponds to ``$m^\prime=m-1$''  if $o_m$  viewed from  $p$   is totally \textit{blocked}  by other $m-1$ obstacles; and (\romannumeral 2) in the rest of the proof, unless stated otherwise, we use ``viewed from  $p$'' by default when the location relation of obstacles is considered. There are three  cases. 

First, if $o_m$ is   \textit{disjointed} with other $m-1$ obstacles, we denote by \underline{$\asymp (o_m,\bigcup _{i=1}^{m-1} o_i)$} this case. See  Figure \ref{fig:1N:c}.  Let's consider $o_m$,  according to the method for proving the case $m^\prime=1$,  it is easy to get \textbf{a  result} --- for any point $p_a$ such that $p_a\in {\mathscr R}_{cvr}(p, l-\pi(s,p) )$ and    $p_a\notin \bigcup _{i=1} ^{m-1} {\mathscr R}_{cvr}(e^*_i, l-\pi(s,e^*_i))$ and $\lnot (\sphericalangle (p_a,e^*))$, we have that  $p_a\in   {\mathscr R}_{cvr}(e^*_m, l-\pi (s,e^*_m) )$. Furthermore, we have assumed $\mathbb{P}$ holds when $m^{\prime}=m-1$. This completes the proof of the case  $\asymp (o_m,\bigcup _{i=1}^{m-1} o_i)$.


Second, if $o_m$ is in the front of other $m-1$ obstacles, we denote by \underline{$\succ (o_m,\bigcup _{i=1}^{m-1} o_i)$} this case.
We connect the points $e^*$, $\cdots$, $e_i^*$, $\cdots$, and $e_m^*$ such that the set of segments build the shortest path from $e^*$ to $e_m^*$. The total length of these segments is $\pi(e^*,e_m^*)$.  Without loss of generality, assume that $p_a$ is to be a point such that $p_a\in {\mathscr R}_{cvr}(p,l-\pi(s,p) $ and $p_a\notin \bigcup _{i=1} ^{m-1} {\mathscr R}_{cvr}(e^*_i, l-\pi(s,e^*_i) )$ and $\lnot (\sphericalangle (p_a,e^*))$. We also  connect the points $e^*_m$ and $p_a$. Naturally, we get the shortest path from $e^*$ to $p_a$, its total length is $\pi(e^*,e_m^*)+ dist(e_m^*,p_a)$. This implies that there is no other path  (from $e^*$ to $p_a$) whose length is less  than $\pi(e^*,e_m^*)+ dist(e_m^*,p_a)$.   Let $\Delta$ be  $( \pi(e^*,p)+dist(p,p_a))-(\pi(e^*,e_m^*)+ dist(e_m^*,p_a))$, we have that $l-\pi(s,e_m^*)= dist(e_m^*,p_a) +\Delta > dist(e_m^*,p_a)$, since $\Delta>0$. Therefore, $p_a\in   {\mathscr R}_{cvr}(e^*_m, l-\pi(s,e^*_m))$. Furthermore, we have assumed $\mathbb{P}$ holds when $m^{\prime}=m-1$. 
This completes the proof of the case $\succ (o_m,\bigcup _{i=1}^{m-1} o_i)$.

Third, if $o_m$ is partially blocked by other $m-1$ obstacles. We denote by \underline{$\prec(o_m,\bigcup _{i=1}^{m-1} o_i)$} this case. Without loss of generality, assume that $o_m$  is partially blocked by an obstacle $o_j$. 
\underline{(\romannumeral 1)} If  $o_j$ does not block any other $m-2$ obstacles (see, e.g., Figure \ref{fig:1N:d}), according to the method for proving  the case $m^\prime=1$,  we can also get a result which is the same as the  result  shown in the case $\asymp (o_m,\bigcup _{i=1}^{m-1} o_i)$. 
\underline{(\romannumeral 2)} Otherwise, we connect the points $e^*$, $\cdots$, $e_i^*$ ($i\in [1,\cdots,j-1]$), $\cdots$, and $e_m^*$ such that the set of segments build the shortest path from $e^*$ to $e_m^*$. The rest of steps are the same as the ones for proving the case $\succ(o_m,\bigcup _{i=1}^{m-1} o_i)$. And we  can also get a result which is the same as the previous one. Furthermore, we have assumed the proposition $\mathbb{P}$ holds when $m^{\prime}=m-1$. This completes the proof of the case $\prec(o_m,\bigcup _{i=1}^{m-1} o_i)$. 

\underline{In summary},  the proposition $\mathbb{P}$ also holds when $m^\prime=m$. Combining the preliminary conclusion shown in the first  paragraph of  Case 2, this completes {the proof of Case 2.2}.

{$\bullet$} Case 3: $p$ is not located  in any obstacle. The proof  for this case is almost the same as the one for  Case 2. (Substituting the words ``other obstacles'' in  Case 2 with ``obstacles''.)  
Pulling all together, hence the lemma holds.
\end{proof}


{ \normalsize
\bibliographystyle{abbrv}
\bibliography{sample}
}


\end{document}